\documentclass{article}

 \usepackage[preprint]{neurips_2026}

\usepackage[utf8]{inputenc} 
\usepackage[T1]{fontenc}    
\usepackage{hyperref}       
\usepackage{url}            
\usepackage{booktabs}       
\usepackage{amsfonts}       
\usepackage{nicefrac}       
\usepackage{microtype}      
\usepackage{xcolor}         

\title{Learning-Augmented Approximation for Unrelated-Machines Makespan Scheduling}

\usepackage{amsmath,amssymb}
\usepackage{mathtools}
\usepackage{xspace}
\usepackage[capitalise]{cleveref}
\usepackage{booktabs}
\usepackage{subcaption}
\usepackage{amsthm}
\theoremstyle{plain}
\newtheorem{theorem}{Theorem}[section]

\newtheorem{corollary}[theorem]{Corollary}
\theoremstyle{definition}

\theoremstyle{remark}
\newtheorem{remark}[theorem]{Remark}
\usepackage[ruled,vlined,linesnumbered]{algorithm2e}
\SetCommentSty{textbf}
\DontPrintSemicolon
\SetAlCapSkip{0.2em}
\SetVlineSkip{0.3ex}
\crefname{algocf}{Algorithm}{Algorithms}
\Crefname{algocf}{Algorithm}{Algorithms}

\makeatletter
  \DeclareRobustCommand\onedot{\futurelet\@let@token\@onedot}
  \def\@onedot{\ifx\@let@token.\else.\null\fi\xspace}
\makeatother

\newcommand\ie{i.e\onedot}

\newcommand\etal{et al\onedot}

\renewcommand{\paragraph}[1]{\noindent\textbf{#1}}

\usepackage{xparse}
\newcommand{\dom}{\operatorname{dom}}
\NewDocumentCommand{\predphi}{s}{%
  \IfBooleanTF{#1}{\smash{\widehat\phi}}{\widehat\phi}\xspace%
}
\newcommand{\OPT}{\mathrm{OPT}}

\makeatletter
  \renewcommand{\@notice}{%
    \enlargethispage{2\baselineskip}%
    \@float{noticebox}[b]%
      \vspace{0.5\baselineskip}%
      \footnotesize\@noticestring%
    \end@float%
  }
\renewcommand{\normalsize}{%
  \@setfontsize\normalsize\@xpt\@xipt
  \abovedisplayskip      6\p@ \@plus 1.5\p@ \@minus 4\p@
  \abovedisplayshortskip \z@ \@plus 2\p@
  \belowdisplayskip      \abovedisplayskip
  \belowdisplayshortskip 3\p@ \@plus 2\p@ \@minus 2\p@
}
\normalsize
\makeatother

\author{%
Kaito Baba\\
The University of Tokyo,\\Tokyo, Japan\\
\texttt{baba-kaito662@g.ecc.u-tokyo.ac.jp}
\And
  Evripidis Bampis \\
Sorbonne Université, CNRS, LIP6,\\F-75005 Paris, France\\
  \texttt{evripidis.bampis@lip6.fr}
  \And
  Giorgos Mitropoulos\\
  Sorbonne Université, CNRS, LIP6,\\F-75005 Paris, France\\
  \texttt{georgios.mitropoulos@lip6.fr}
}

\begin{document}

\maketitle

\begin{abstract}
 Recently, Antoniadis \etal (ICLR 2025) proposed a framework for incorporating predictions to approximate NP-hard selection problems. Despite its simplicity, this approach tightly matches theoretical lower bounds, making its generalization highly compelling. We address an open  question raised  in the work of Antoniadis \etal, concerning the extension of this approach to other important problems outside the class of selection problems, such as scheduling. We  develop a learning-augmented algorithm for the makespan minimization problem on unrelated machines, denoted by $R\|C_{\max}$. By using predictions of heavy job assignments, we achieve a polynomial-time $(1+\varepsilon)$-approximation for accurate predictions that smoothly degrades to a worst-case 2-approximation as the error increases. We conclude our work with an empirical analysis of our method.
\end{abstract}

\section{Introduction}

Scheduling is one of the central problems in combinatorial optimization~\citep{Pinedo2016,Brucker2007}.
A particularly fundamental model is scheduling on unrelated parallel machines, denoted by $R\|C_{\max}$~\citep{GRAHAM1979287}.
Here, we are given a set of $n$ jobs and a set of $m$ machines, and each job $j$ has a machine-dependent processing time $p_{ij}$ on machine $i$.
The goal is to assign each job to one of the machines so as to minimize the maximum load over all machines.
This problem models settings in which processing times vary across machines.
At the same time, it is a notoriously difficult NP-hard problem: the classical result of \citet{Lenstra1990} gives a polynomial-time $2$-approximation, and they showed that improving the approximation factor below $3/2$ is NP-hard.

In recent years, there has been growing interest in algorithms with predictions, or \emph{learning-augmented algorithms}~\citep{NEURIPS2018_73a427ba,pmlr-v80-lykouris18a}.
Motivated by the increasing availability of high-quality machine-learning predictions, this line of work studies how algorithms can exploit such predictions while retaining rigorous guarantees when they are inaccurate.
Most early work focused on online algorithms, where the standard objectives are \emph{consistency}, \emph{robustness}, and \emph{smoothness}: the algorithm should perform optimally when predictions are accurate, remain competitive when they are poor, and degrade smoothly as prediction error increases~\citep{10.1145/3528087,NEURIPS2023_82f0dae8}.

More recently, learning-augmented ideas have also begun to appear in approximation algorithms for NP-hard problems~\citep{NEURIPS2024_2db08b94,pmlr-v235-bampis24a,antoniadis2025approximation}.
In this offline setting, predictions may describe part of an optimal solution or some structural object that a classical approximation algorithm would otherwise need to guess or search over.
The main goal is to obtain near-optimal performance when the prediction is accurate, together with a guarantee that degrades smoothly with the prediction error.
Robustness can often be obtained by running a classical approximation algorithm in parallel, so the central challenge is to design algorithms that yield meaningful consistency and smoothness.

A recent work of~\citet{antoniadis2025approximation} develops such algorithms for a broad class of selection problems with linear objectives, including Vertex Cover and Steiner Tree, and provides approximation guarantees that vary smoothly with false-positive and false-negative prediction errors.
However, their framework does not cover problems beyond this linear-objective selection setting, and they explicitly identify constructing such prediction-dependent smooth guarantees for other optimization problems, such as scheduling, as an open direction.

\vspace{-1pt}
In this work, we develop a learning-augmented approximation algorithm for $R\|C_{\max}$, addressing this open question posed by \citet{antoniadis2025approximation} in the context of scheduling.
Our starting point is the classical heavy/short decomposition for unrelated-machine scheduling~\citep{Lenstra1990,mordechai2015optimization}.
For a target makespan $T$, the combinatorial difficulty lies in assignments with processing time greater than $\varepsilon T$; once these assignments are fixed, the remaining short jobs can be handled by a linear program (LP) and an appropriate rounding procedure, such as LST rounding \citep{Lenstra1990} and Shmoys--Tardos rounding~\citep{Shmoys1993}.
Classical schemes search over the heavy assignments to obtain a $(1+\varepsilon)\OPT$ guarantee, but this requires non-polynomial running time unless $m$ is fixed to be small.
We use the prediction to localize this search.

\vspace{-1pt}

Our learning-augmented algorithm achieves a $(1+\varepsilon)\OPT$ guarantee with accurate predictions while running in time comparable to the classical polynomial-time $2$-approximation algorithm, thereby improving the makespan guarantee when $m$ is large. The guarantee degrades smoothly with the prediction error and, even with poor predictions, never exceeds the classical $2\OPT$ baseline. At a high level, false-positive predictions contribute to the makespan bound, while false-negative predictions can either be recovered by local search or left in the bound, with the constant search budget $K$ smoothly controlling this tradeoff. \cref{tbl:comparison-classical-ours} summarizes the comparison with the classical guarantees. Experiments show that, consistently with our theory, accurate predictions allow our algorithm to outperform the classical $2$-approximation baseline in makespan,
while running significantly faster than the classical $(1+\varepsilon)$-approximation based on heavy-assignment search.

\begin{table}[t]
  \centering
  \setlength{\tabcolsep}{2pt}
  \caption{
    Comparison with classical bounds for $\smash{R\|C_{\max}}$.
    Here, $\smash{R_{\OPT}^-}$, $\smash{R_{\OPT,K}^-}$, and $\smash{R_{\OPT}^+}$ are prediction error measures defined in \cref{sec:prediction-error-measures}: the first two capture false negatives, while $\smash{R_{\OPT}^+}$ captures false-positive overload.
    $\smash{R_{\OPT,K}^-}$ monotonically decreases with the search budget $K$, which is chosen as a constant parameter. $P\coloneqq\max_{i,j} p_{ij}$ denotes the largest processing time of the input.
  }
  \vspace*{0.5mm}
  \label{tbl:comparison-classical-ours}
  \scalebox{0.66}{
    \begin{tabular}{lccc}
      \toprule
      \textbf{Algorithm}
      & \textbf{Makespan guarantee}
      & \textbf{Running time}
      & \textbf{Assumption / Parameter} \\
      \midrule

      \makebox[17mm][l]{LST \citep{Lenstra1990}}
      & $2\OPT$
      & $O(\mathrm{poly}(m,n,\log P))$
      & --
      \\

      \makebox[17mm][l]{LST \citep{Lenstra1990}}
      & $(1+\varepsilon)\OPT$
      & $O((n+1)^{m/\varepsilon} \mathrm{poly}(m,n,\log P))$
      & fixed $m$
      \\

      \citet{mordechai2015optimization}
      & $(1+\varepsilon)T$
      & $O(2^k \mathrm{poly}(m,n,\log P,1/\varepsilon))$
      & \hspace{-2mm}$k\coloneqq\left|\{(i,j)\mid p_{ij}>\varepsilon T\}\right|$, given $T$
      \\

      \midrule

      Ours (\cref{thm:outer-layer-guarantee})
      & $
        \min\left\{
        \OPT
        +
        \max\{\varepsilon \OPT, R_{\OPT,K}^-\}
        +
        R_{\OPT}^+,
        \,
        2\OPT
        \right\}
      $
      & $
        O\!\left(
          (mn)^K
          \mathrm{poly}(m,n,\log P,1/\varepsilon)
        \right)
      $
      & --
      \\

      Ours (\cref{cor:outer-layer-guarantee-asymmetric})
      & $
        \min\left\{
          (1+\varepsilon)\OPT
          +
          R_{\OPT}^+,
          \,
          2\OPT
        \right\}
      $
      & $
        O\!\left(
        (mn)^K
        \mathrm{poly}(m,n,\log P,1/\varepsilon)
        \right)
      $
      & $K\ge |\mathrm{Miss}_{\OPT}(\widehat{\phi},\sigma^\star)|$
      \\

      Ours (\cref{cor:outer-layer-guarantee-k-free})
      & $
        \min\left\{
        \OPT
        +
        \max\{\varepsilon \OPT, R_{\OPT}^-\}
        +
        R_{\OPT}^+,
        \,
        2\OPT
        \right\}
      $
      & $
        O\!\left(
        \mathrm{poly}(m,n,\log P,1/\varepsilon)
        \right)
      $
      & --
      \\
      \bottomrule
    \end{tabular}
  }
\end{table}

Our contributions can be summarized as follows:
\vspace{-2.5mm}
\begin{itemize}
    \item We introduce a learning-augmented framework for $R\|C_{\max}$, addressing the open question of \citet{antoniadis2025approximation} in the context of scheduling.
    Rather than searching globally over heavy assignments, our approach localizes the search around the prediction.
    \vspace{-0.5mm}
    \item Our guarantee exhibits the desired learning-augmented behavior.
    Accurate predictions yield a $(1+\varepsilon)\OPT$ makespan bound in polynomial time, and it degrades smoothly with the prediction error; even with poor predictions, it never exceeds the classical $2\OPT$ baseline.
    \vspace{-0.5mm}
    \item We also provide the smooth tradeoff between running time and makespan quality.
    Larger search budgets $K$ allow the algorithm to repair more missed heavy assignments, while $K=0$ yields a fast, purely prediction-based variant without local search.
    \vspace{-0.5mm}
    \item Numerical experiments show that, consistent with our theory, accurate predictions improve over the classical $2$-approximation baseline, while our algorithm runs significantly faster than classical $(1+\varepsilon)$-type heavy-assignment search.
\end{itemize}

\vspace{-7pt}
\section{Related work}
\vspace{-6pt}

Here we review the prior work most relevant to ours; further discussion is provided in \cref{apx:extended-related-work}.

\vspace{-1pt}
\paragraph{Learning-augmented online algorithms.}
Learning-augmented algorithms were first studied extensively in the online setting, where predictions are used to improve performance beyond worst-case guarantees while retaining robustness when the predictions are inaccurate~\citep{10.1145/3528087}.
Starting with the works of \citet{pmlr-v80-lykouris18a,NEURIPS2018_73a427ba} on caching, ski rental, and online scheduling, subsequent work developed sharper guarantees for caching~\citep{doi:10.1137/1.9781611975994.112,wei:LIPIcs.APPROX/RANDOM.2020.60}, primal-dual frameworks for online covering problems~\citep{NEURIPS2020_e834cb11}, and more general online covering linear and semidefinite programs~\citep{NEURIPS2022_fc5a1845}.

\vspace{-2pt}
\paragraph{Learning-augmented approximation algorithms.}
More recently, learning-augmented ideas have been applied to offline approximation algorithms for NP-hard problems.
Recent results include learning-augmented approximation algorithms for Max-Cut and related CSPs~\citep{NEURIPS2024_2db08b94}, dense NP-hard problems~\citep{pmlr-v235-bampis24a}, Maximum Independent Set~\citep{braverman_et_al:LIPIcs.APPROX/RANDOM.2024.24}, and permutation problems~\citep{pmlr-v267-bampis25a}.
Recently, \citet{antoniadis2025approximation} gave black-box algorithms for selection problems with linear objectives, including Vertex Cover, Steiner Tree, Min-Weight Perfect Matching, Knapsack, and Clique, with approximation guarantees that degrade smoothly with false-negative and false-positive prediction errors.
Their framework, however, does not cover scheduling problems such as $R\|C_{\max}$, which are not selection problems and whose objective is makespan minimization rather than the total weight of a selected set.
They leave scheduling as an open question for such smooth guarantees, which we address for $R\|C_{\max}$.

\vspace{-2pt}
\paragraph{Scheduling and heavy-assignment search.}
In the context of classical approximation algorithms without prediction, \citet{Lenstra1990} gave a polynomial-time $2$-approximation for $R\|C_{\max}$, as well as a $(1 + \varepsilon)\OPT$ approximation scheme when the number of machines $m$ is fixed.
The latter relies on exhaustive search over heavy assignments and LP rounding for the remaining short jobs.
For a given threshold $T$, \citet{mordechai2015optimization} later gave a parameterized
approximation scheme whose parameter is
$k\coloneqq\bigl|\{(i,j)\mid p_{ij}>\varepsilon T\}\bigr|$.
However, these $(1+\varepsilon)$-type guarantees are not polynomial-time unless $m$ is fixed to be small.
Our work uses predictions to go beyond these fixed-$m$ regimes: when the predictions are accurate, we obtain a $(1+\varepsilon)\OPT$ guarantee with running time comparable to that of the classical polynomial-time $2$-approximation algorithm.

\vspace{-5pt}
\section{Problem setup}
\vspace{-4pt}

\paragraph{Problem definition.}
We consider the \emph{makespan minimization} problem on \emph{unrelated-machines}.
There are $m$ machines and $n$ jobs. 
For every machine $i\in [m]$ and job $j\in [n]$\footnote{Here and throughout, for a positive integer $r$, we write $[r]\coloneqq\{1,2,\dots,r\}$.}, let $p_{ij}\in \mathbb{R}_{>0}$ denote the processing time of job $j$ on machine $i$.
A schedule is a function $\sigma\colon[n]\to [m]$, and its makespan is
$
C_{\max}(\sigma) \coloneqq \max_{i\in [m]} \sum_{j:\sigma(j)=i} p_{ij}
$.
The goal is to find a schedule $\sigma$ minimizing $C_{\max}(\sigma)$.


\vspace{-4pt}
\paragraph{Prediction model.}
The algorithm is given as input a predicted partial assignment
$
\predphi*\colon[n]\to [m]\cup\{\bot\},
$
where $\predphi(j)=i$ means that the prediction suggests assigning job $j$ to machine $i$, and $\predphi(j)=\bot$ means that no assignment is predicted for job $j$.
Intuitively, $\predphi$ is intended to capture part of the heavy-assignment structure of a good schedule.
We let
$
 \dom(\predphi) \coloneqq \{j\in [n] \mid \predphi(j)\neq \bot\}
$
denote the set of jobs for which the prediction specifies a machine.
See \cref{rem:full-assignment-prediction} for further discussion.

\vspace{-5pt}
\section{Algorithm}
\label{sec:algorithm}
\vspace{-5pt}


Our algorithm has the classical two-layer structure used in approximation schemes for $R\|C_{\max}$ \citep{Lenstra1990,mordechai2015optimization}: an \emph{inner} layer that solves a completion problem for a fixed threshold makespan $T$, and an \emph{outer} layer that ranges over the relevant threshold values.
However, in our algorithm, both layers are prediction-aware, which we explain in the following sections.

\begin{algorithm}[t]
  \fontsize{9.6pt}{9.8pt}\selectfont
  \caption{Inner layer: \textsc{Prediction-Guided-Completion}$(T,\varepsilon,K,\predphi)$}
  \label{alg:inner-layer}
  \KwIn{target makespan $T$, parameter $\varepsilon\in(0,1)$, search budget $K$, and predicted assignment $\predphi$}

  \tcp{Step 1: Keep only the predicted assignments that are $T$-heavy.}
  \vspace{2pt}Compute $\predphi*_T$ according to \cref{eq:heavy-projection}\;
  Initialize the candidate set $\mathcal{C}\gets \emptyset$\;

  \tcp{Step 2: Search only for missing heavy assignments.}
  \vspace{2pt}\ForEach{augmentation $A$ of $\predphi*_T$ with budget $K$ satisfying \cref{eq:augmentation-constraint}}{
    $\psi_A \gets \widehat{\phi}_T \cup A$\;

    \tcp{Step 3: Fix the heavy assignments and solve the residual LP over relevant thresholds.}
    \ForEach{threshold $\tau \in \Lambda(T)$ defined in \cref{eq:residual-lp-thresholds}}{
      Solve the residual LP \cref{eq:residual-lp} with threshold $\tau$ for $\psi_A$\;

      \If{the LP is feasible with optimum value $L_{\psi_A}$}{
        \tcp{Step 4: Round the solution of the residual LP.}
        $S_A \gets$ the schedule obtained by Shmoys--Tardos rounding~\citep{Shmoys1993}\;
        $\mathcal{C}\gets \mathcal{C}\cup\{S_A\}$\;
      }
    }
  }

  \tcp{Step 5: Return the best schedule among all candidates.}
  \Return{the schedule in $\mathcal{C}$ that minimizes makespan}\;
\end{algorithm}

\vspace{-5pt}
\subsection{Inner layer: prediction-localized heavy-assignment completion}
\vspace{-5pt}

We now describe the inner layer of our algorithm, which is the main algorithmic component where prediction enters the search procedure. The pseudocode is given in \cref{alg:inner-layer}.

\vspace{-2pt}
\paragraph{Step 1: Keep only the predicted assignments that are $T$-heavy.}
The prediction $\predphi*$ may contain assignments that are irrelevant at scale $T$.
Accordingly, the algorithm forms the $T$-heavy projection:

\vspace{-10pt}
\begin{equation}\label{eq:heavy-projection}
    \predphi*_T(j)=
    \begin{cases}
    \predphi*(j), & \text{if } \predphi*(j)\neq \bot \text{ and } p_{\predphi(j),j}>\varepsilon T,\\
    \bot, & \text{otherwise.}
    \end{cases}
\end{equation}

\vspace{-6pt}
The intuition is that only jobs with processing time exceeding $\varepsilon T$ need special treatment.
Assignments with processing time at most $\varepsilon T$ are left to the LP, since the rounding incurs only an additive $\varepsilon T$ loss.

\begin{remark}\label[remark]{rem:full-assignment-prediction}
    A full-assignment prediction $\predphi\colon [n]\to [m]$ is a special case of our framework:
    once $T$ is fixed, one simply keeps the predicted assignments with $\smash{p_{\predphi(j),j}>\varepsilon T}$.
    Allowing abstention is more flexible, since uncertain heavy jobs can be left to the local search rather than fixed incorrectly.
\end{remark}

\begin{remark}
  In the present formulation, every predicted assignment with processing time larger than $\varepsilon T$ is trusted and fixed.
  This makes the distinction between false-positive and false-negative errors especially transparent.
  A variant that allows editing or deleting some predictions will be considered in \cref{apx:editable-variant}.
  Conceptually, such edits simply trade one type of prediction error for the other.
\end{remark}

\paragraph{Step 2: Search only for missing heavy assignments.}
The projected prediction $\predphi*_T$ may miss some heavy assignments in a good schedule.
The algorithm therefore enumerates augmentations $A\colon[n]\to [m]\cup\{\bot\}$ that add at most $K$ extra heavy assignments, where
\begin{equation}\label{eq:augmentation-constraint}
  \operatorname{dom}(A)\cap \operatorname{dom}(\predphi_T)=\emptyset,
  \;\text{and}\;
  |\operatorname{dom}(A)|\le K,
  \;\text{and}\;
  \ p_{A(j),j}>\varepsilon T \text{ for every } j\in \operatorname{dom}(A).
\end{equation}
Thus, the algorithm replaces the classical global heavy-assignment search~\citep{Lenstra1990,mordechai2015optimization} by a search for at most $K$ missing heavy assignments.
Even in the worst case, the necessary budget $K$ never exceeds that of the classical heavy-assignment search.
We also note that our algorithm still works if $K$ is insufficient; the remaining missed heavy assignments are smoothly reflected in the makespan guarantees in \cref{thm:inner-layer-guarantee,thm:outer-layer-guarantee}.
Thus, $K$ is a freely chosen constant parameter, not a problem- or prediction-dependent quantity.
This role is analogous to the trust parameter $\lambda$ used in prior learning-augmented algorithms~\citep{NEURIPS2018_73a427ba,NEURIPS2020_e834cb11}: it controls how much the algorithm corrects or trusts the prediction.

\paragraph{Step 3: Fix the heavy assignments and solve the residual LP over relevant thresholds.}
Once $\psi_A \coloneqq \predphi*_T \cup A$ is fixed, the algorithm computes the fixed machine loads from $\operatorname{dom}(\psi_A)$ and solves an LP on the remaining jobs.
However, if the prediction has errors and $K$ does not recover all missed heavy assignments, some residual jobs may still have processing time exceeding $\varepsilon T$.
Hence, unlike in the classical setting, the residual LP cannot be restricted to edges $(i, j)$ with $p_{ij}\le \varepsilon T$ alone.

For this reason, the algorithm considers a threshold parameter $\tau\ge \varepsilon T$ and lets the residual LP use all edges $(i,j)$ with $p_{ij}\le \tau$.
Since $\tau$ affects the LP only through the admissible edge set, it changes only when $\tau$ crosses some processing time $p_{ij}$.
Thus, it suffices to consider the discrete candidate set
\begin{equation}\label{eq:residual-lp-thresholds}
  \Lambda(T)\coloneqq \{\varepsilon T\}\cup\{p_{ij}\mid p_{ij}\ge \varepsilon T\}.
\end{equation}
For each $\tau\!\in\!\Lambda(T)$, the algorithm solves an LP over edges with $p_{ij}\!\le\! \tau$ to minimize the load bound $L$:
\begin{equation}\label{eq:residual-lp}
  \begin{aligned}
    \text{minimize} \qquad & L\\[-6pt]
    \text{subject to} \qquad
    & \sum_{i=1}^m x_{ij} = 1
    && \forall j\notin \operatorname{dom}(\psi_A),\\[-1pt]
    & \hspace{-10.6pt}\sum_{j\notin \operatorname{dom}(\psi_A)} p_{ij}x_{ij} \le L-\ell_i^{\mathrm{fix}}(\psi_A)
    && \forall i\in [m],\\[-4pt]
    & x_{ij}=0
    && \forall i,j \text{ with } j\notin \operatorname{dom}(\psi_A) \text{ and } p_{ij}>\tau,\\[-1pt]
    & x_{ij}\ge 0
    && \forall i,j \text{ with } j\notin \operatorname{dom}(\psi_A),
  \end{aligned}
\end{equation}
where
$
  \ell_i^{\mathrm{fix}}(\psi_A)\!\coloneqq\!\sum_{j:\psi_A(j)=i} p_{ij}
$
is the fixed load on machine $i$ induced by $\psi_A$.
Note that the threshold $\tau$ only affects the residual LP; heavy assignments $\predphi*_T(j)$ are still defined by $\varepsilon T$.

\paragraph{Step 4: Round the solution of the residual LP.}
If the residual LP is feasible, the algorithm applies the Shmoys--Tardos rounding~\citep{Shmoys1993} to obtain an integral schedule.\footnote{This rounding generalizes the LST rounding~\citep{Lenstra1990}. In our setting, both yield the same guarantee.}
Because every residual edge has processing time at most $\tau$, the rounding increases the load of each machine by at most $\tau$.
Therefore, the final makespan of the schedule extending $\psi_A$ is at most
$
L_{\psi_A} + \tau
$,
where $L_{\psi_A}$ is the optimum value of the residual LP, if the LP is feasible.
This is exactly why the heavy/short decomposition is useful and why the algorithm does not need predictions for short assignments.

\paragraph{Step 5: Return the best schedule among all candidates.}
The algorithm compares all schedules produced by feasible augmentations, and returns the one with minimum makespan.

\begin{algorithm}[t]
  \fontsize{9.9pt}{10pt}\selectfont
  \caption{Outer layer: \textsc{Prediction-Localized-Search}$(\varepsilon,K,\widehat{\phi})$}
  \label{alg:outer-layer}
  \KwIn{parameter $\varepsilon\in(0,1)$, search budget $K$, and predicted assignment $\widehat{\phi}$}

  \tcp{Step 1: Compute the baseline upper bound.}
  Compute a baseline schedule $S_0$ using the classical $2$-approximation~\citep{Lenstra1990}\;
  Set $B\gets C_{\max}(S_0)$, and initialize the candidate set $\mathcal{C}\gets \{S_0\}$\;

  \tcp{Step 2: Run the inner procedure on all critical thresholds.}
  \ForEach{$T\in \mathcal T$, where $\mathcal{T}$ is defined in \cref{eq:critical-thresholds}}{
      $S_T \gets \textsc{Prediction-Guided-Completion}(T,\varepsilon,K,\predphi*)$\;
      $\mathcal{C} \gets \mathcal{C} \cup \{S_T\}$\;
  }

  \tcp{Step 3: Return the best schedule among all candidates.}
  \Return{the schedule in $\mathcal{C}$ that minimizes makespan}\;
\end{algorithm}

\vspace{-3pt}
\subsection{The outer layer: synchronizing the threshold search with the prediction}
\vspace{-3pt}

The inner layer assumes that the target makespan $T$ is fixed.
The outer layer removes this assumption.
Unlike the classical inner routine, our prediction-guided inner layer does not certify feasibility at a fixed threshold $T$, since failure may reflect an inaccurate prediction rather than true infeasibility; consequently, the classical binary search over $T$ is no longer applicable.
Nevertheless, we can still design the outer layer without increasing the overall running-time order, by enumerating the polynomially many critical threshold values at which the prediction-induced heavy-assignment structure changes.
The pseudocode is given in \cref{alg:outer-layer}.

\paragraph{Step 1: Compute the baseline upper bound.}
The algorithm first computes the classical $2$-approximation schedule $S_0$~\citep{Lenstra1990} and sets $B \coloneqq C_{\max}(S_0)$.
Since $B\le 2\OPT$, the unknown optimum lies in the interval $[B/2,B]$.

\paragraph{Step 2: Run the inner procedure on all critical thresholds.}
The heavy/short classification changes only when $\varepsilon T$ crosses some processing time $p_{ij}$, \ie, when $T$ crosses some critical threshold $T=p_{ij}/\varepsilon$.
Therefore, the outer algorithm constructs the critical threshold set
\begin{equation}\label{eq:critical-thresholds}
  \mathcal T
  =
  \left\{
    \frac{B}{2}
  \right\}
  \cup
  \left\{
  \frac{p_{ij}}{\varepsilon} \;\middle|\;
  \frac{B}{2} \le \frac{p_{ij}}{\varepsilon}\le B
  \right\}
  \cup
  \{B\},
\end{equation}
and runs the inner procedure for every $T\in \mathcal{T}$:
$
\textsc{Prediction-Guided-Completion}(T,\varepsilon,K,\predphi*)
$.

\paragraph{Step 3: Return the best schedule among all candidates.}
Finally, the algorithm returns the best schedule among:
(i) the baseline schedule $S_0$, and
(ii) all schedules produced at all critical thresholds.

\section{Analysis}
\label{sec:analysis}

\subsection{Prediction error measures}
\label{sec:prediction-error-measures}

Following \citet{antoniadis2025approximation}, we decompose prediction error into false-negative and false-positive components.
Because their framework is designed for selection problems, we define the corresponding notions for \(R\|C_{\max}\).
For any schedule $\sigma\colon [n]\to[m]$,
define the \emph{false-negative threshold} and \emph{false-positive overload}:
\begin{equation}\label{eq:fn-fp-error-measures}
  R_T^-(\widehat{\phi},\sigma)
  \coloneqq
  \!\!\!\max_{j\in \mathrm{Miss}_T(\widehat{\phi},\sigma)}
  \!\!\! p_{\sigma(j),j},
  \qquad
  R_T^+(\widehat{\phi},\sigma)
  \coloneqq
  \max_{i\in [m]}
  \!\!\!\sum_{\substack{j:\ \widehat{\phi}_T(j)=i\\ \widehat{\phi}_T(j)\neq \sigma(j)}}\!\!\! p_{ij},
\end{equation}
where $\!\smash{\mathrm{Miss}_T(\widehat{\phi},\sigma)}\!$ is the \emph{missed heavy jobs} defined as
$
  \smash{\mathrm{Miss}_T(\widehat{\phi},\sigma)}
  \!\!\coloneqq\!\!
  \{j\!\in\! [n]\!\mid\! p_{\sigma(j),j}\!>\!\varepsilon T \text{ and}\allowbreak \predphi*_T(j)\!=\!\bot\}
$.
Here, we treat $R_T^-(\widehat{\phi},\sigma)=0$ if
$\mathrm{Miss}_T(\widehat{\phi},\sigma)=\emptyset$.
Intuitively, $R_T^-(\widehat{\phi},\sigma)$ measures the largest heavy assignment missed by the prediction, while $\smash{R_T^+(\widehat{\phi},\sigma)}$ measures the maximum overload caused by
predicted heavy assignments that disagree with \(\sigma\).
We further define the \emph{$K$-residual false-negative threshold}:

\vspace{-14pt}
\begin{equation}
  R_{T,K}^-(\predphi,\sigma)
  \coloneqq
  \!\!\!\min_{\substack{U\subseteq \mathrm{Miss}_T(\widehat{\phi},\sigma)\\ |U|\le K}}
  \;
  \max_{j\in \mathrm{Miss}_T(\widehat{\phi},\sigma)\setminus U}
  \!\!\!p_{\sigma(j),j},
\end{equation}

\vspace{-5pt}
with the convention that the maximum over the empty set is $0$.
Equivalently, $\smash{R_{T,K}^-(\widehat{\phi},\sigma)}$ is the $(K+1)$-st largest processing time among the missed heavy jobs.
This represents the largest processing time among the missed heavy jobs that remain after optimally recovering up to $K$ of them.
In particular, when $K = 0$, it is the same as the original false-negative threshold:
$
R_{T,0}^-(\widehat{\phi},\sigma)=R_T^-(\widehat{\phi},\sigma)
$.

\subsection{Guarantees for the inner layer}
\vspace{-1pt}

\begin{theorem}[Inner-layer guarantee]
  \label{thm:inner-layer-guarantee}
  Fix $\varepsilon\in(0,1)$, a target value $T>0$, a search budget $K\ge 0$, and a prediction $\widehat{\phi}$.
  Then for any schedule $\sigma$, \cref{alg:inner-layer} returns a feasible schedule $S$ satisfying
  \begin{equation}
    C_{\max}(S)
    \le
    C_{\max}(\sigma)
    + \max\{\varepsilon T, R_{T, K}^-(\widehat{\phi},\sigma)\}
    + R_T^+(\widehat{\phi},\sigma).
  \end{equation}
  The running time of the algorithm is
  $
    O((mn)^K\mathrm{poly}(m,n,\log P,1/\varepsilon))
  $,
  where $P\coloneqq\max_{i,j} p_{ij}$ is the largest processing time.
\end{theorem}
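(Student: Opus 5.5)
The plan is to exhibit, for an arbitrary schedule $\sigma$, one augmentation $A$ and one threshold $\tau\in\Lambda(T)$ for which the residual LP is feasible. We then bound $L_{\psi_A}+\tau$ by the right-hand side. Since \cref{alg:inner-layer} returns the best candidate and every candidate has makespan at most $L_{\psi_A}+\tau$ by the rounding guarantee of Step 4, this suffices.

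\emph{Choice of $A$ and $\tau$.} Let $U\subseteq \mathrm{Miss}_T(\widehat\phi,\sigma)$, $|U|\le K$, attain the minimum in the definition of $R^-_{T,K}$; concretely, $U$ consists of the $K$ missed heavy jobs with largest $p_{\sigma(j),j}$. Set $A(j)=\sigma(j)$ for $j\in U$ and $A(j)=\bot$ otherwise. Then \cref{eq:augmentation-constraint} holds: $U$ is disjoint from $\dom(\widehat\phi_T)$ by definition of $\mathrm{Miss}_T$, $|U|\le K$, and $p_{\sigma(j),j}>\varepsilon T$ on $U$. So $A$ is enumerated. Let $\tau^\ast=\max\{\varepsilon T,R^-_{T,K}(\widehat\phi,\sigma)\}$. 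Since $R^-_{T,K}$ is either $0$ or some $p_{ij}>\varepsilon T$, we have $\tau^\ast\in\Lambda(T)$.

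\emph{Feasible LP point.} For every residual job $j\notin\dom(\psi_A)$, set $x_{\sigma(j),j}=1$, i.e.\ follow $\sigma$. We check admissibility: $p_{\sigma(j),j}\le\tau^\ast$. Either $p_{\sigma(j),j}\le\varepsilon T$, or $j$ is $\sigma$-heavy. In the latter case $\widehat\phi_T(j)=\bot$, since $j\notin\dom(\widehat\phi_T)$, so $j\in\mathrm{Miss}_T\setminus U$ and hence $p_{\sigma(j),j}\le R^-_{T,K}$. Next we bound the load of machine $i$ in this solution. Split it into three parts. The first is $\sum_{j:\psi_A(j)=i,\ \sigma(j)=i}p_{ij}$ plus the residual jobs with $\sigma(j)=i$; together these are at most the $\sigma$-load of $i$, hence at most $C_{\max}(\sigma)$. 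The second is the jobs of $A$ on $i$, which all agree with $\sigma$ and are therefore already counted in the first part. The third is the jobs $j$ with $\widehat\phi_T(j)=i\ne\sigma(j)$, whose total is at most $R^+_T(\widehat\phi,\sigma)$. Thus $\ell_i^{\mathrm{fix}}(\psi_A)+\sum_j p_{ij}x_{ij}\le C_{\max}(\sigma)+R^+_T$, so $L_{\psi_A}\le C_{\max}(\sigma)+R^+_T$.

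\emph{Rounding and conclusion.} Shmoys--Tardos rounding of the LP optimum at threshold $\tau^\ast$ yields a schedule extending $\psi_A$ with makespan at most $L_{\psi_A}+\tau^\ast$. This is the step requiring care: the rounding guarantee is applied with the fixed loads as offsets, which is fine because each machine's residual fractional load is rounded with additive error at most the maximum admissible $p_{ij}$, namely $\tau^\ast$. Combining gives the bound. For the running time, the number of augmentations is $\sum_{k\le K}\binom{n}{k}m^k=O((mn)^K)$. We have $|\Lambda(T)|\le mn+1$. Each LP solve and rounding is $\mathrm{poly}(m,n,\log P)$, and the $1/\varepsilon$ factor is absorbed there. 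The main subtlety is verifying that $\sigma$-heavy residual jobs are exactly the unrecovered missed jobs, so that the threshold $\tau^\ast$ admits them.
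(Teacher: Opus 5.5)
Your proposal is correct and follows the paper's proof essentially step for step. It uses the same augmentation built from an optimal $U^\star\subseteq\mathrm{Miss}_T(\widehat{\phi},\sigma)$, the same threshold $\tau^\star=\max\{\varepsilon T,R^-_{T,K}(\widehat{\phi},\sigma)\}\in\Lambda(T)$, the same $\sigma$-following LP point with load at most $C_{\max}(\sigma)+R^+_T(\widehat{\phi},\sigma)$ (because the augmentation agrees with $\sigma$), Shmoys--Tardos rounding with additive loss $\tau^\star$, and the same enumeration count; your explicit remark that the fixed loads act as additive offsets in the rounding bound is left implicit in the paper but changes nothing.
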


\vspace{-3pt}
Here we present a proof sketch of \cref{thm:inner-layer-guarantee}. The full proof is deferred to \cref{apx:full-proof-of-inner-layer}.

\vspace{-4pt}
\begin{proof}[Proof sketch]
  By the definition of $\smash{R_{T,K}^-(\predphi*,\sigma)}$, there exists a subset
  $U^\star\subseteq \mathrm{Miss}_{T}(\predphi*,\sigma)$ with $|U^\star|\le K$ such that $
    \max_{j\in \mathrm{Miss}_{T}(\widehat{\phi},\sigma)\setminus U^\star} p_{\sigma(j),j}
    =
    R_{T,K}^-(\predphi*,\sigma)
  $.
  Define an augmentation $A^\star(j)=\sigma(j)$ for $j\in U^\star$ and $A^\star(j)=\bot$
  otherwise. Since $U^\star$ consists only of missed heavy assignments,
  we can confirm that $A^\star$ satisfies \cref{eq:augmentation-constraint} and is one of the augmentations considered by the algorithm.

  Let
  $\smash{\psi^\star\coloneqq \predphi*_T\cup A^\star}$ and
  $
  \smash{\tau^\star\coloneqq
  \max\{\varepsilon T,R_{T,K}^-(\predphi*,\sigma)\}}
  $.
  By construction, we can confirm $\tau^\star\in\Lambda(T)$. Consider the residual LP for
  $(\psi^\star,\tau^\star)$, and assign every residual job $j$ to
  $\sigma(j)$.
  Then we can confirm that every edge $(\sigma(j),j)$ is allowed in the residual LP with $\tau^\star$:
  if
  $p_{\sigma(j),j}\le \varepsilon T$ this is immediate, while otherwise $j$ is
  a missed heavy job not recovered by $U^\star$, and hence
  $\smash{p_{\sigma(j),j}\le R_{T,K}^-(\predphi*,\sigma)\le \tau^\star}$.

  \vspace{-1pt}
  Consider the assignments that follows $\smash{\psi^\star}$ on fixed jobs and $\sigma$ on residual jobs.
  We can confirm that this assignment is bounded by
  $
    \smash{C_{\max}(\sigma)+R_T^+(\widehat{\phi},\sigma)}
  $.
  Thus, the residual LP for $(\psi^\star,\tau^\star)$ is feasible with load bound at most
  $\smash{C_{\max}(\sigma)+R_T^+(\widehat{\phi},\sigma)}$.

  \vspace{-1pt}
  Let $\smash{L_{\psi^\star}^{\tau^\star}}$ be the returned optimum of the residual LP for $(\psi^\star,\tau^\star)$.
  By the classical rounding argument (Theorem 2.1 in \citet{Shmoys1993}), this  yields an integral
  schedule of makespan at most $\smash{L_{\psi^\star}^{\tau^\star} + \tau^\star}$.
  Hence, for $(A^\star,\tau^\star)$, the algorithm obtains a schedule of makespan at most
  \begin{equation}
    \smash{C_{\max}(\sigma)
    +R_T^+(\predphi*,\sigma)
    +\max\{\varepsilon T,R_{T,K}^-(\predphi*,\sigma)\}.}
  \end{equation}
  Since the algorithm returns the best candidate, the same bound holds for its
  output.

  The running time follows from the enumeration bounds. The number of
  augmentations is at most
  $
    \sum_{s=0}^{K}\binom{n}{s}m^s
    \le (K+1)(mn)^K
  $,
  and $|\Lambda(T)|\le mn+1$. For each pair $(A,\tau)$, the LP and rounding take polynomial time: $\mathrm{poly}(m,n,\log P,1/\varepsilon)$. Hence, we obtain the claimed total running time.
\end{proof}

\vspace{-2pt}
\paragraph{Interpretation.}
\cref{thm:inner-layer-guarantee} exhibits a clean tradeoff, controlled by the budget $K$, between paying for false-negative heavy assignments in running time and paying for them in the makespan bound.
The local search step (Step 2) runs in \((mn)^{O(K)}\) time, which increases monotonically with \(K\).
In contrast, the residual false-negative term $\smash{R_{T,K}^-(\widehat{\phi},\sigma)}$ decreases monotonically with $K$.
In particular, when $K=0$, the largest missed heavy job appears directly in the bound; when $K=1$, the algorithm can recover the largest missed heavy job by search, so the bound is determined by the second-largest missed heavy job; and so on.
Hence increasing $K$ makes the algorithm slower but improves the makespan guarantee, whereas decreasing $K$ speeds up the algorithm at the cost of a weaker guarantee.
In this sense, the parameter $K$ provides a smooth interpolation between search effort and schedule quality.
For further intuition, \cref{col:inner-layer-guarantee-asymmetric,cor:inner-layer-guarantee-k-free} in \cref{apx:additional-inner-layer-analysis} discuss two special cases of \cref{thm:inner-layer-guarantee}: one where $K$ is large enough to recover all missed heavy assignments, and one where $K=0$, so no local search is performed.
The smoothness of the final makespan guarantee is discussed later in \cref{sec:outer-layer-guarantee}, where we analyze the outer-layer theorem.

\subsection{Guarantees for the outer layer}
\label{sec:outer-layer-guarantee}

\begin{theorem}[Outer-layer guarantee]
  \label{thm:outer-layer-guarantee}
  Fix $\varepsilon\in (0,1)$, a search budget $K\ge 0$, and a prediction $\predphi*$.
  Let $\sigma^\star$ be an optimal schedule, and write $\OPT \coloneqq C_{\max}(\sigma^\star)$.
  Then \cref{alg:outer-layer} returns a feasible schedule $S$ satisfying
  \begin{equation}
    \begin{aligned}
      C_{\max}(S)
      &\le
      \smash{\min\left\{
        \OPT
        + \max\{\varepsilon \OPT, R_{\OPT,K}^-(\widehat{\phi},\sigma^\star)\}
        + R_{\OPT}^+(\widehat{\phi},\sigma^\star),\,
        2\OPT
      \right\}}
      \\[6pt]
      &\le
      \smash{\min\left\{
        (1+\varepsilon)\OPT
        + R_{\OPT,K}^-(\widehat{\phi},\sigma^\star)
        + R_{\OPT}^+(\widehat{\phi},\sigma^\star),\,
        2\OPT
      \right\}.}
    \end{aligned}
  \end{equation}
  The running time of the algorithm is
  $
    \smash{O((mn)^K\mathrm{poly}(m,n,\log P,1/\varepsilon))}.
  $
\end{theorem}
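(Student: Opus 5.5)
The plan is to combine the baseline schedule $S_0$ with a single well-chosen call to the inner layer. Choosing the threshold $T$ so that it induces exactly the same heavy/short classification as $\OPT$ lets \cref{thm:inner-layer-guarantee} be transferred from scale $T$ to scale $\OPT$.

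First, the $2\OPT$ term. The candidate set $\mathcal{C}$ contains $S_0$, and $C_{\max}(S_0)=B\le 2\OPT$ by the classical guarantee of \citet{Lenstra1990}. The returned schedule is the best in $\mathcal{C}$, so its makespan is at most $2\OPT$. The same guarantee gives $B/2\le \OPT\le B$.

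Next, the choice of threshold. Let $T^\star$ be the largest element of $\mathcal{T}$ with $T^\star\le \OPT$; it exists because $B/2\in\mathcal{T}$ and $B/2\le\OPT$. The key claim is that for every pair $(i,j)$ we have $p_{ij}>\varepsilon T^\star$ if and only if $p_{ij}>\varepsilon\OPT$.

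The direction "$>\varepsilon\OPT$ implies $>\varepsilon T^\star$" is immediate from $T^\star\le\OPT$. For the converse, suppose $\varepsilon T^\star<p_{ij}\le \varepsilon\OPT$. Then $p_{ij}/\varepsilon\in(T^\star,\OPT]$. Since $T^\star\ge B/2$ and $\OPT\le B$, this value lies in $[B/2,B]$, so it belongs to $\mathcal{T}$. It is also at most $\OPT$ and strictly larger than $T^\star$, contradicting the maximality of $T^\star$.

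Consequently the heavy projections agree, $\widehat{\phi}_{T^\star}=\widehat{\phi}_{\OPT}$, and so do the missed sets, $\mathrm{Miss}_{T^\star}(\widehat\phi,\sigma^\star)=\mathrm{Miss}_{\OPT}(\widehat\phi,\sigma^\star)$. Since the error measures depend on $T$ only through these objects, we get
\[
R^-_{T^\star,K}(\widehat\phi,\sigma^\star)=R^-_{\OPT,K}(\widehat\phi,\sigma^\star),
\qquad
R^+_{T^\star}(\widehat\phi,\sigma^\star)=R^+_{\OPT}(\widehat\phi,\sigma^\star).
\]
This step is the main obstacle. It is exactly where the inner layer's lack of a feasibility certificate rules out binary search, and it requires checking carefully that the boundary cases ($p_{ij}/\varepsilon$ equal to $\OPT$ or to $B/2$) are covered by the closed interval in \cref{eq:critical-thresholds}.

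Now apply \cref{thm:inner-layer-guarantee} with $T=T^\star$ and $\sigma=\sigma^\star$. The schedule $S_{T^\star}\in\mathcal{C}$ has makespan at most
\[
\OPT+\max\{\varepsilon T^\star,R^-_{\OPT,K}(\widehat\phi,\sigma^\star)\}+R^+_{\OPT}(\widehat\phi,\sigma^\star).
\]
Using $\varepsilon T^\star\le\varepsilon\OPT$ gives the first bound. Taking the minimum with the $2\OPT$ bound from $S_0$ gives the first line of the theorem. The second line follows from $\max\{a,b\}\le a+b$ for nonnegative $a,b$.

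Finally, the running time. Computing $S_0$ takes $\mathrm{poly}(m,n,\log P)$ time, and $|\mathcal{T}|\le mn+2$. Each call to the inner layer costs $O((mn)^K\mathrm{poly}(m,n,\log P,1/\varepsilon))$ by \cref{thm:inner-layer-guarantee}. The extra factor $mn+2$ is absorbed into the polynomial, which yields the stated bound.
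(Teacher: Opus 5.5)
Your proposal is correct and follows the paper's proof essentially step for step. You take the largest $T^\star\in\mathcal{T}$ with $T^\star\le\OPT$ and note that no $p_{ij}/\varepsilon$ lies in $(T^\star,\OPT]$, so the heavy/short classification and both error measures coincide at $T^\star$ and $\OPT$. You then apply \cref{thm:inner-layer-guarantee} at $T^\star$ with comparison schedule $\sigma^\star$, and finish with the baseline $S_0$ and the bound $|\mathcal{T}|\le mn+2$. You also make explicit two steps the paper only asserts: the maximality argument behind ``by construction'', and the final inequality $\max\{a,b\}\le a+b$.
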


\vspace{-11pt}
\begin{proof}
  Fix an optimal schedule $\sigma^\star$ of makespan $\OPT$.
  Recall that $B$ is the baseline $2$-approximation makespan.
  Since $\OPT\!\in\! [B/2,B]$, there exists a largest critical value $T^\star\!\in\! \mathcal T$ such that $T^\star\! \le\! \OPT$.
  By construction of $\mathcal T$, there is no value of the form $p_{ij}/\varepsilon$ in the open interval $(T^\star,\OPT]$.
  Hence, for every machine--job pair $(i,j)$, we have:
  $
    p_{ij}\!\!>\!\varepsilon T^\star
    \!\!\Leftrightarrow\!\!
    p_{ij}\!>\!\varepsilon \OPT
  $.
  Thus, the heavy/short decomposition is identical at thresholds $T^\star$ and $\OPT$, and in particular we have
  $
    \smash{\widehat{\phi}_{T^\star}=\widehat{\phi}_{\OPT}},
    \smash{\mathrm{Miss}_{T^\star}(\widehat{\phi},\sigma^\star)}
    =
    \smash{\mathrm{Miss}_{\OPT}(\widehat{\phi},\sigma^\star)},
    \smash{R_{T^\star,K}^-(\widehat{\phi},\sigma^\star)}
    =
    \smash{R_{\OPT,K}^-(\widehat{\phi},\sigma^\star)}
  $
  and
  $
    \smash{R_{T^\star}^+(\widehat{\phi},\sigma^\star)}
    =
    \smash{R_{\OPT}^+(\widehat{\phi},\sigma^\star)}
  $.

  Hence we may apply Theorem~\ref{thm:inner-layer-guarantee} to the call
  $
  \smash{\textsc{Prediction-Guided-Completion}(T^\star,\varepsilon,K,\widehat{\phi})}
  $,
  taking $\sigma^\star$ as the comparison schedule.
  It follows that the returned schedule $S_{T^\star}$ satisfies

  \vspace{-10pt}
  \begin{equation}\label{eq:outer-layer-proof-t-star}
    \begin{aligned}
      C_{\max}(S_{T^\star})
      &\le
      \smash{C_{\max}(\sigma^\star)
      + \max\{\varepsilon T^\star, R_{T^\star, K}^-(\widehat{\phi},\sigma^\star)\}
      + R_{T^\star}^+(\widehat{\phi},\sigma^\star)}
      \\
      &\le
      \smash{\OPT
      + \max\{\varepsilon \OPT, R_{\OPT, K}^-(\widehat{\phi},\sigma^\star)\}
      + R_{\OPT}^+(\widehat{\phi},\sigma^\star)}.
    \end{aligned}
  \end{equation}

  \vspace{-3pt}
  The outer algorithm returns the best schedule among the inner-layer candidates and the baseline schedule $S_0$. Since $C_{\max}(S_0)\le 2\OPT$, we obtain the claimed minimum bound.

  It remains to bound the running time. The set $\mathcal T$ contains at most $mn+2$ critical values. For each $T\in\mathcal T$, Algorithm~\ref{alg:inner-layer} runs in time
  $
  O((mn)^K\cdot \mathrm{poly}(m,n,\log P,1/\varepsilon))
  $.
  Multiplying by the polynomial number of outer iterations preserves the same asymptotic form.
\end{proof}

\vspace{-9pt}
\paragraph{Interpretation.}
As in \cref{{thm:inner-layer-guarantee}}, \cref{thm:outer-layer-guarantee} provides a smooth tradeoff between running time and makespan quality.
The running time grows monotonically with the budget $K$, while the makespan false-negative term $R_{\OPT,K}^-(\predphi*,\sigma^\star)$ decreases monotonically with $K$.
Thus, larger budgets $K$ let the algorithm spend more time correcting predictions and obtain stronger guarantees, whereas smaller budgets run faster at the cost of a weaker bound.

\vspace{-3pt}
If the prediction is accurate enough that $\smash{R_{\OPT,K}^-(\widehat{\phi},\sigma^\star)=0}$ and $\smash{R_{\OPT}^+(\widehat{\phi},\sigma^\star)=0}$, the theorem gives a $(1+\varepsilon)\OPT$ schedule, matching the classical heavy-assignment search guarantee~\citep{Lenstra1990,mordechai2015optimization}.
Unlike classical schemes, which are polynomial-time only when the number of machines $m$ is fixed to be small, the exponent in our running time is governed by the search budget $K$.
This budget is the user-chosen constant parameter that controls how much the algorithm corrects the prediction, analogous to trust parameters in prior learning-augmented algorithms~\citep{NEURIPS2018_73a427ba,NEURIPS2020_e834cb11}.
Thus, the running time remains polynomial independently of whether $m$ or other problem-dependent quantities are fixed, and accurate predictions allow the algorithm to improve over the classical polynomial-time $2$-approximation baseline.

\vspace{-2pt}
As the prediction error increases, the error terms
$\smash{R_{\OPT,K}^-(\widehat{\phi},\sigma^\star)}$ and
$\smash{R_{\OPT}^+(\widehat{\phi},\sigma^\star)}$ increase, and the makespan bound degrades smoothly with them.
Even in the worst case, the bound never exceeds the classical $2$-approximation baseline.
Thus, \cref{thm:outer-layer-guarantee} exhibits the desired learning-augmented behavior: accurate predictions yield improved guarantees, while worse predictions lead to smooth degradation without exceeding the classical worst-case guarantee.

For further intuition, we present the following corollary for the sufficiently accurate prediction.

\begin{corollary}[False-negative recovery]
  \label[corollary]{cor:outer-layer-guarantee-asymmetric}
  If
  $
    \smash{K \ge |\mathrm{Miss}_{\OPT}(\widehat{\phi},\sigma^\star)|}
  $,
  \cref{alg:outer-layer} returns a feasible schedule $S$ such that

  \vspace{-16pt}
  \begin{equation}
    \smash{C_{\max}(S)
    \le
    \min\left\{
      (1 + \varepsilon)\OPT + R_{\OPT}^+(\widehat{\phi},\sigma^\star),\,
      2\OPT
    \right\}.}
  \end{equation}
  The running time of the algorithm is
  $
    \smash{O((mn)^K\mathrm{poly}(m,n,\log P,1/\varepsilon))}.
  $
\end{corollary}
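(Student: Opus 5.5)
This follows directly from \cref{thm:outer-layer-guarantee}. The only task is to show that the hypothesis $K \ge |\mathrm{Miss}_{\OPT}(\widehat{\phi},\sigma^\star)|$ makes the residual false-negative term vanish, that is, $R_{\OPT,K}^-(\widehat{\phi},\sigma^\star)=0$. Once this is shown, substituting into the second line of the bound in \cref{thm:outer-layer-guarantee} gives
\[
C_{\max}(S)\le\min\left\{(1+\varepsilon)\OPT+R_{\OPT}^+(\widehat{\phi},\sigma^\star),\,2\OPT\right\}.
\]
Equivalently, in the first line we get $\max\{\varepsilon\OPT,0\}=\varepsilon\OPT$. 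The running time is inherited verbatim from \cref{thm:outer-layer-guarantee}, since the algorithm itself is unchanged.

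To show the vanishing, unfold the definition of the $K$-residual false-negative threshold at $T=\OPT$ and choose $U \coloneqq \mathrm{Miss}_{\OPT}(\widehat{\phi},\sigma^\star)$. By hypothesis $|U|\le K$, so $U$ is an admissible choice in the minimum. The inner maximum is then taken over $\mathrm{Miss}_{\OPT}(\widehat{\phi},\sigma^\star)\setminus U=\emptyset$, which equals $0$ by the stated convention. Because all processing times are positive, the quantity $R_{\OPT,K}^-$ is always nonnegative, and the witness above shows it is at most $0$. Hence it equals $0$.

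I do not expect a real obstacle here. The one point to check is that the convention for the empty maximum is the one in the definition of $R^-_{T,K}$, which it is. The edge case $\mathrm{Miss}_{\OPT}(\widehat{\phi},\sigma^\star)=\emptyset$ also needs no separate treatment, since the same choice $U=\emptyset$ covers it.
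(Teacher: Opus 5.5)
Your proposal is correct and follows the paper's own proof: you choose $U=\mathrm{Miss}_{\OPT}(\widehat{\phi},\sigma^\star)$ to get $R_{\OPT,K}^-(\widehat{\phi},\sigma^\star)=0$, substitute into \cref{thm:outer-layer-guarantee}, and carry over its running-time bound unchanged. Your explicit use of the empty-maximum convention simply spells out the paper's one-line remark that every missed heavy assignment can be recovered by the augmentation step.
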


\vspace{-4pt}
The proof follows immediately from \cref{thm:outer-layer-guarantee}; we provide
the details in \cref{apx:proof-of-outer-corollaries}.

\vspace{-2pt}
\paragraph{Interpretation.}
When the budget $K$ is sufficiently large, the false-negative term $\smash{R_{\OPT,K}^-(\widehat{\phi},\sigma^\star)}$ disappears entirely from the makespan bound.
Thus, false negatives are paid for purely in running time, while false positives are paid for in the approximation bound through $\smash{R_{\OPT}^+(\widehat{\phi},\sigma^\star)}$, revealing a clear asymmetry between the two types of prediction error.
The corresponding smoothness and robustness properties are exactly those discussed for \cref{thm:outer-layer-guarantee}.

Moreover, the budget needed for this condition has a prediction-sensitive behavior reminiscent of learning-augmented guarantees.
If the prediction has no false negatives, then $K=0$ already suffices to remove the false-negative term from the makespan bound.
The budget needed to remove the false-negative term increases gradually with the prediction error and, even in the worst case, does not exceed the scale of the classical heavy-assignment search.

Note that this corollary is intended to isolate the intuition for the large-$K$ case and to highlight the asymmetry between false negatives and false positives; it does not mean that our algorithm requires $\smash{K\ge |\mathrm{Miss}_{\OPT}(\widehat{\phi},\sigma^\star)|}$.
As shown in \cref{thm:outer-layer-guarantee}, the algorithm also works for insufficient $K$, with the unrecovered false-negative term
$\smash{R_{\OPT,K}^-(\widehat{\phi},\sigma^\star)}$
appearing smoothly in the makespan bound.

As the opposite extreme, we next state the corollary for the case with no local search, \ie, $K=0$.

\begin{corollary}[No-search bound]
  \label[corollary]{cor:outer-layer-guarantee-k-free}
  When $K=0$, \cref{alg:outer-layer} returns a feasible schedule $S$ such that
  \begin{equation}
    \begin{aligned}
      C_{\max}(S)
      &\le
      \smash{\min\left\{
        \OPT + \max\{\varepsilon \OPT,\, R_{\OPT}^-(\widehat{\phi},\sigma^\star)\} + R_{\OPT}^+(\widehat{\phi},\sigma^\star),\,
        2\OPT
      \right\}}
      \\[7pt]
      &\le
      \smash{\min\left\{
        (1 + \varepsilon)\OPT + R_{\OPT}^-(\widehat{\phi},\sigma^\star) + R_{\OPT}^+(\widehat{\phi},\sigma^\star),\,
        2\OPT
      \right\}.}
    \end{aligned}
  \end{equation}
  The running time of the algorithm is
  $
    O(\mathrm{poly}(m,n,\log P,1/\varepsilon)).
  $
\end{corollary}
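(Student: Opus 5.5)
This corollary is the specialization of \cref{thm:outer-layer-guarantee} to $K=0$, so the plan is to instantiate that theorem and simplify the false-negative term and the running time.

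\emph{Makespan bound.} First, by the definition of the $K$-residual false-negative threshold with $K=0$, the only admissible subset is $U=\emptyset$. Hence
$R_{\OPT,0}^-(\widehat{\phi},\sigma^\star)=\max_{j\in \mathrm{Miss}_{\OPT}(\widehat{\phi},\sigma^\star)} p_{\sigma^\star(j),j}=R_{\OPT}^-(\widehat{\phi},\sigma^\star)$.
This identity was already noted after the definition, including the convention that both sides are $0$ when $\mathrm{Miss}_{\OPT}$ is empty. Substituting it into the first line of \cref{thm:outer-layer-guarantee} gives the first inequality of the corollary directly. For the second inequality, I will use that $\max\{a,b\}\le a+b$ for nonnegative $a,b$. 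Here $a=\varepsilon\OPT$, and $b=R_{\OPT}^-(\widehat{\phi},\sigma^\star)\ge 0$ because processing times are positive. The minimum with $2\OPT$ is monotone in its first argument, so the inequality carries through.

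\emph{Running time.} Setting $K=0$ in the bound $O((mn)^K\mathrm{poly}(m,n,\log P,1/\varepsilon))$ makes $(mn)^0=1$, which leaves $O(\mathrm{poly}(m,n,\log P,1/\varepsilon))$. As a direct check against the algorithm: with $K=0$, the only augmentation satisfying \cref{eq:augmentation-constraint} is the empty one. Each inner call then solves at most $|\Lambda(T)|\le mn+1$ residual LPs and roundings, and the outer layer makes $|\mathcal{T}|\le mn+2$ inner calls plus one baseline computation.

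There is no real obstacle. The only points needing care are the empty-set convention in the identity $R_{\OPT,0}^-=R_{\OPT}^-$ and the nonnegativity used in the $\max\le$ sum step.
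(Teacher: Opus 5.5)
Your proposal is correct and follows the paper's proof: specialize \cref{thm:outer-layer-guarantee} to $K=0$, use $R_{\OPT,0}^-(\widehat{\phi},\sigma^\star)=R_{\OPT}^-(\widehat{\phi},\sigma^\star)$ (only $U=\emptyset$ is admissible), and substitute $(mn)^0=1$ into the running time. Your extra steps, namely deriving the second line via $\max\{a,b\}\le a+b$ with nonnegativity and the direct count of LP solves, are valid but redundant, since the theorem's own second line already provides the former.
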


\vspace{-5pt}
The proof follows immediately from \cref{thm:outer-layer-guarantee}; we provide
the details in \cref{apx:proof-of-outer-corollaries}.

\vspace{-3pt}
\paragraph{Interpretation.}
When $K=0$, no local search is performed, so the $K$-dependent factor in the running time disappears.
Even in this case, the prediction is still used effectively, and accurate predictions can lead to a significantly improved makespan bound.
The corresponding smoothness and robustness properties are the same as those discussed for Theorem~\ref{thm:outer-layer-guarantee}.

\vspace{-4pt}
\section{Numerical experiments}
\label{sec:experiments}
\vspace{-4pt}

In this section, we empirically investigate how the proposed algorithm actually behaves as the prediction quality and the search budget $K$ vary.

\vspace{-5pt}
\subsection{Experimental setup}
\vspace{-4pt}

Following prior work on $R\|C_{\max}$ \citep{FANJULPEYRO2011301,FANJULPEYRO201055,OJMO_2021__2__A2_0}, we use synthetic instances whose processing times are sampled independently and uniformly from the interval $[1,100]$.
Following prior work on learning-augmented algorithms~\citep{NEURIPS2018_73a427ba,antoniadis2025approximation}, we generate synthetic predictions with controlled error rates to evaluate how the algorithm behaves as the prediction error varies.
Specifically, we obtain predictions by randomly corrupting a prescribed fraction of the heavy assignments in an optimal schedule into false negatives or false positives.
Throughout the experiments, we set $\varepsilon=0.01$ and vary the search budget $K$ from $0$ to $2$.
Since \citet{mordechai2015optimization} only gives a fixed-\(T\) routine, we combine it with the LST outer search as a baseline, which yields a running time of
\(O(2^{mn}\mathrm{poly}(m,n,\log P,1/\varepsilon))\).
All experiments are run with five different random seeds, and we report the mean and standard error.
Further details on the experimental setup and implementation are given in \cref{apx:detailed-experimental-setup}.


\begin{figure}[t]
  \centering
  \begin{minipage}{0.65\linewidth}
    \hspace*{-5mm}\includegraphics[height=42mm]{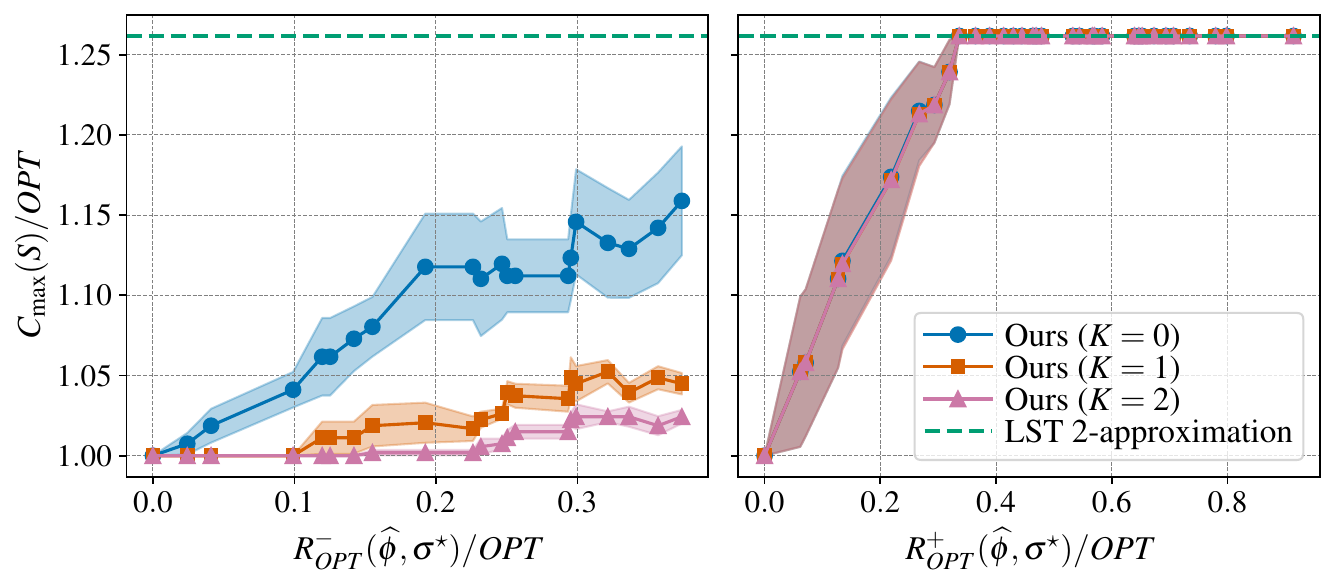}
  \end{minipage}
  \begin{minipage}{0.32\linewidth}
    \raisebox{3pt}{\includegraphics[height=40.9mm]{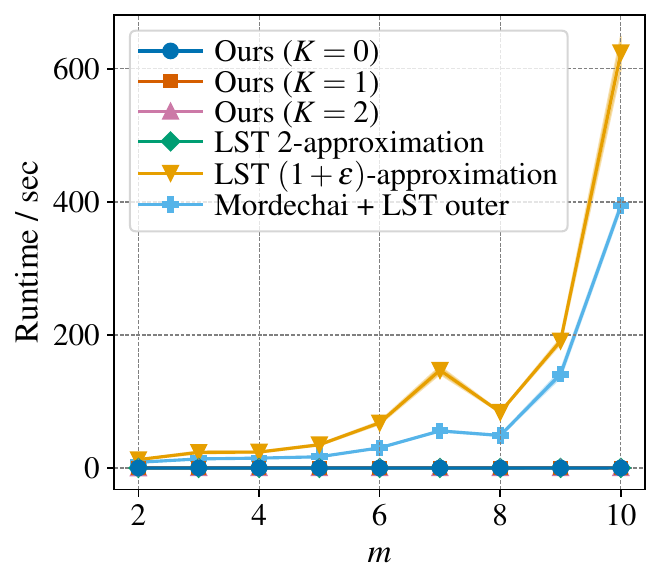}}
  \end{minipage}

  \vspace{-2mm}
  \hspace{1mm}\begin{minipage}{0.29\linewidth}
    \centering
    \subcaption{False negatives}
    \label{fig:false-negative-makespan}
  \end{minipage}
  \begin{minipage}{0.32\linewidth}
    \centering
    \subcaption{False positives}
    \label{fig:false-positive-makespan}
  \end{minipage}
  \hspace{12mm}\begin{minipage}{0.2\linewidth}
    \centering
    \subcaption{Running time}
    \label{fig:runtime-varying-m}
  \end{minipage}

  \vspace{-1mm}
  \caption{
    Numerical results.
    (a) and (b) show the change in makespan as the false-negative and false-positive rates are varied, respectively.
    (c) shows the change in running time as the number of machines \(m\) is varied; the curves for Ours with \(K=0,1,2\) and the LST \(2\)-approximation baseline overlap because they finish almost immediately.
    The lines denote the mean, and the shaded regions denote the standard error, both computed over five independent runs with different random seeds.
  }
  \label{fig:numerical-results}
  \vspace{3pt}
\end{figure}

\vspace{-5pt}
\subsection{Effect of false negatives and false positives on makespan}
\label{sec:experiment-effect-false-negatives-positives}
\vspace{-4pt}

We use instances with $m\!=\!10$ and $n\!=\!100$, and vary either the false-negative rate or the false-positive rate while fixing the other to $0$.
The results are shown in \cref{fig:false-negative-makespan,fig:false-positive-makespan}.
Note that both classical $(1+\varepsilon)$-type baselines are not available in these experiments due to their excessive running time.

We first discuss the false-negative experiment in \cref{fig:false-negative-makespan}. Across all choices of $K$, our algorithm recovers an optimal schedule when there are no false negatives, \ie, when $\smash{R_{\OPT}^-(\widehat{\phi},\sigma^\star)/\OPT=0.0}$, and the makespan increases smoothly as the false-negative error grows.
Even in the extreme case where all heavy predictions are removed, around $\smash{R_{\OPT}^-(\widehat{\phi},\sigma^\star)/\OPT\simeq 0.38}$, the makespan remains below the LST $2$-approximation baseline (The fact that this baseline has $C_{\max}(S)/\OPT\simeq 1.26$ rather than $2$ is consistent with the fact that the $2$-approximation ratio is a worst-case guarantee).
This behavior precisely matches the learning-augmented guarantee in \cref{thm:outer-layer-guarantee}.

Comparing $K=0,1,2$, we observe that larger $K$ consistently improves the makespan.
This reflects the monotone decrease of  $R_{\OPT,K}^-(\predphi*,\sigma^\star)$ in \cref{thm:outer-layer-guarantee}.
In particular, for $K=1$ when $\smash{R_{\OPT}^-(\predphi*,\sigma^\star)/\OPT\lesssim 0.1}$, and for $K=2$ when $\smash{R_{\OPT}^-(\predphi*,\sigma^\star)/\OPT\lesssim 0.14}$, the algorithm still attains the optimal makespan despite the presence of false negatives.
This is consistent with \cref{cor:outer-layer-guarantee-asymmetric}: in these regions, the condition $K\ge |\mathrm{Miss}_{\OPT}(\predphi*,\sigma^\star)|$ is satisfied, so a small amount of search around the prediction completely eliminates the effect of false negatives from the makespan.

\vspace{-2pt}
We next discuss the false-positive experiment in \cref{fig:false-positive-makespan}.
As in the false-negative experiment, the algorithm attains the optimal makespan when there are no false positives, and the makespan degrades smoothly as the false-positive error increases.
Even in the worst case, the makespan remains below the classical $2$-approximation baseline.
Unlike the false-negative case, all choices of $K$ yield nearly identical makespans.
This reflects \cref{thm:outer-layer-guarantee}, where the false-positive term $\smash{R_{\OPT}^+(\predphi,\sigma^\star)}$ is independent of $K$.
Thus, \cref{fig:false-negative-makespan,fig:false-positive-makespan} illustrate the theoretical asymmetry: false negatives can be recovered by search, whereas false positives remain in the makespan bound.

\vspace{-3pt}
\cref{apx:additional-experiments-fp-fn} reports additional results for other $m,n$ settings and non-uniform instances with machine-dependent correlations in processing times, showing similar qualitative behavior consistently.

\vspace{-3pt}
\subsection{Running time comparison}
\vspace{-4pt}

We next examine the running time as the number of machines varies.
In \cref{fig:runtime-varying-m}, we fix $n=10$ and vary $m$, with both the false-negative and false-positive rates set to $0.1$.

\vspace{-3pt}
Our algorithm and the classical LST $2$-approximation baseline finish almost immediately even as $m$ increases.
In contrast, both classical $(1+\varepsilon)$-type baselines scale rapidly, reaching roughly $400$ and $600$ seconds even on these tiny $m,n$ settings.
This is consistent with \cref{thm:outer-layer-guarantee}: our algorithm, like the $2$-approximation baseline, has polynomial running time with constant search budget $K$, whereas the classical $(1+\varepsilon)$-type baselines are not polynomial-time unless $m$ is fixed to be small.

\vspace{-3pt}
\cref{apx:additional-experiments-runtime-n} reports running time experiments with fixed $m$
and varying $n$, where our algorithm remains comparable to the polynomial-time
$2$-approximation baseline and much faster than the classical
$(1+\varepsilon)$-type baselines.

\vspace{-4pt}
\section{Conclusion}
\vspace{-4pt}

We developed a learning-augmented approximation algorithm for unrelated-machines makespan scheduling, addressing an open question of \citet{antoniadis2025approximation} on constructing smooth prediction-dependent guarantees to scheduling.
The resulting algorithm achieves a $(1+\varepsilon)\mathrm{OPT}$ guarantee when the prediction is accurate, while its makespan bound degrades smoothly with false-negative and false-positive prediction errors and never exceeds the classical $2\mathrm{OPT}$ baseline.
Our experiments support these theoretical properties, showing smooth degradation with prediction error, the predicted asymmetry between false negatives and false positives, and substantially better scalability than classical $(1+\varepsilon)$-type algorithms.
We hope that this prediction-localized perspective will help extend learning-augmented approximation guarantees to other optimization problems whose classical algorithms rely on expensive structural guessing.



\bibliographystyle{icml2026}
\bibliography{refs}


\appendix

\crefalias{section}{appendix}
\crefalias{subsection}{appendix}
\crefalias{subsubsection}{appendix}

\section{Extended related work}
\label{apx:extended-related-work}

\subsection{Learning-augmented online algorithms}

The learning-augmented algorithms framework was developed primarily in the online setting~\citep{10.1145/3528087}.
The central idea is to augment a worst-case algorithm with predictions, for example from a machine-learning model, and to derive guarantees that depend on the quality of these predictions.
Early representative works include the learning-augmented caching algorithm of~\citet{pmlr-v80-lykouris18a} and the algorithms of~\citet{NEURIPS2018_73a427ba} for ski rental and non-clairvoyant scheduling.
These works introduced a paradigm in which predictions can improve performance when accurate, while the algorithm retains worst-case guarantees when predictions are inaccurate.
This viewpoint is often summarized through the desiderata of consistency, robustness, and smoothness: the algorithm should perform well with accurate predictions, remain safe with poor predictions, and degrade gracefully as prediction error increases.

Subsequent work developed this framework in several directions.
For caching, \citet{doi:10.1137/1.9781611975994.112} and \citet{wei:LIPIcs.APPROX/RANDOM.2020.60} obtained sharper and simpler guarantees for learning-augmented caching, and later work studied parsimonious prediction models for the same problem~\citep{pmlr-v162-im22a}.
Beyond caching, \citet{NEURIPS2020_e834cb11} introduced a primal-dual framework for learning-augmented online algorithms and applied it to a variety of online covering problems.
This line was further generalized by \citet{NEURIPS2022_fc5a1845}, who studied learning-augmented online covering linear and semidefinite programs.
Other representative works include learning-augmented algorithms with \(\varepsilon\)-accurate predictions for problems such as caching, online set cover, online facility location, and online network design~\citep{NEURIPS2022_0ea04831}.

\subsection{Learning-augmented approximation algorithms}

A more recent line of work studies learning-augmented algorithms for offline NP-hard optimization problems.
Here the prediction is no longer about future online inputs, but may describe part of an optimal solution, a structural property of an optimal solution, or an object that a classical approximation algorithm would otherwise need to guess.

Several recent works explore this direction beyond the online setting.
\citet{NEURIPS2024_2db08b94} study Max-Cut and related max-CSPs with predictions, showing how learned information can help overcome classical approximation barriers.
\citet{pmlr-v235-bampis24a} consider parsimonious learning-augmented approximations for dense NP-hard problems.
\citet{braverman_et_al:LIPIcs.APPROX/RANDOM.2024.24} study learning-augmented Maximum Independent Set using oracle access to predicted vertex membership in a fixed maximum independent set.
\citet{pmlr-v267-bampis25a} consider NP-hard permutation problems, including scheduling and network-design problems, with predictions about the relative order of elements in an optimal permutation.
\citet{pmlr-v267-aamand25c} design improved approximation algorithms for hard graph problems using edge-based predictions, including MaxCut, Vertex Cover, Set Cover, and Maximum Independent Set.
These works highlight that the choice of prediction object is crucial: different problems require different forms of predictions.

A recent prominent work with broad applicability in offline learning-augmented approximation is \citet{antoniadis2025approximation}.
They give black-box algorithms for a broad class of selection problems with linear objectives, where a feasible subset of items is selected to minimize or maximize total weight.
Their framework yields results for problems such as Vertex Cover, Steiner Tree, Min-Weight Perfect Matching, Knapsack, and Clique.
However, scheduling problems, such as \(R\|C_{\max}\) lie outside this black-box setting, as they are assignment problems with a makespan objective rather than total-weight selection problems, and were explicitly identified as a natural open direction for future work.
Our work takes a step in this direction by developing a learning-augmented algorithmic framework tailored to the heavy-assignment structure of unrelated-machine scheduling.

\subsection{Classical approximation algorithms for \(R\|C_{\max}\)}
The unrelated-machines makespan problem \(R\|C_{\max}\) is a central problem in scheduling.
The classical algorithm of~\citet{Lenstra1990} gives a polynomial-time \(2\)-approximation and shows that no polynomial-time approximation with ratio below \(3/2\) is possible unless \(P=NP\).
The same work also gives a polynomial approximation scheme for the case of a fixed number of machines, which yields a \((1+\varepsilon)\OPT\) guarantee.
The scheme uses a heavy/short decomposition: for a target makespan~\(T\), assignments with processing time larger than \(\varepsilon T\) are treated as large assignments, and there can be only a bounded number of such assignments on each machine in a schedule of makespan at most \(T\).
Thus, if the large assignments are guessed correctly, the remaining jobs are short and can be completed by solving an LP and applying the LST rounding procedure.

\citet{mordechai2015optimization} later studied parameterized approximation schemes for scheduling, including unrelated-machines makespan.
In this approach, the parameter is the number of large machine-job pairs $k(T)=|\{(i,j)\mid p_{ij}>\varepsilon T\}|$.
The algorithm treats the variables corresponding to these large pairs integrally and rounds the remaining fractional solution.
This yields a \((1+\varepsilon)\OPT\) guarantee with running time exponential in \(k(T)\) and polynomial in the input size.

Our algorithm can be viewed as a prediction-localized refinement of this paradigm: instead of searching over the full heavy-assignment structure, it searches only for the heavy assignments that are missing from the prediction, yielding prediction-sensitive tradeoffs in both running time and approximation quality.

\subsection{Learning and scheduling}
There is also some work using machine learning to improve the practical performance of scheduling algorithms and solvers.
\citet{BOUSKA2023990} use a neural network within a decomposition-based algorithm for the single-machine total tardiness problem, where the model guides the decomposition by estimating the best split.
Relatedly, \citet{hijazi2024needimprovingcolumnenhancing} use transformer-style models to enhance column generation for parallel-machine scheduling by predicting improving columns.
These works show that ML predictions can be useful in scheduling practice, especially when they guide expensive search or decomposition steps.
However, their focus is primarily empirical or solver-acceleration oriented.
Also, neither of these works studies \(R\|C_{\max}\) itself.
In contrast, our work gives approximation-theoretic guarantees: the prediction determines a local neighborhood of the classical heavy-assignment search, and the resulting makespan and running time bounds depend explicitly on prediction errors.

\section{Proofs deferred from the main text}

\subsection{Full proof of \cref{thm:inner-layer-guarantee}}\label{apx:full-proof-of-inner-layer}

In \cref{sec:prediction-error-measures}, we presented the proof sketch of \cref{thm:inner-layer-guarantee}. Here we provide the full proof.

\begin{proof}
  By the definition of $R_{T,K}^-(\widehat{\phi},\sigma)$, there exists a subset
  $U^\star\subseteq \mathrm{Miss}_{T}(\widehat{\phi},\sigma)$ with $|U^\star|\le K$ such that every missed heavy job
  outside \(U^\star\) has processing time at most
  \(R_{T,K}^-(\widehat{\phi},\sigma)\), \ie,
  $
    \max_{j\in \mathrm{Miss}_{T}(\widehat{\phi},\sigma)\setminus U^\star} p_{\sigma(j),j}
    =
    R_{T,K}^-(\widehat{\phi},\sigma)
  $.
  Since $U^\star\subseteq \mathrm{Miss}_{T}(\widehat{\phi},\sigma)$, every job in $U^\star$ is missed by
  $\widehat{\phi}_T$ and satisfies $p_{\sigma(j),j}>\varepsilon T$.

  Define an augmentation
  \begin{equation}
    A^\star(j)\coloneqq
    \begin{cases}
    \sigma(j), & \text{if } j\in U^\star,\\[-3pt]
    \bot, & \text{otherwise.}
    \end{cases}
  \end{equation}
  Then we have
  $
    |\operatorname{dom}(A^\star)|=|U^\star|\le K
  $ by construction.
  Thus, $A^\star$ satisfies \cref{eq:augmentation-constraint} and is considered by the algorithm.

  Let
  $
    \psi^\star \coloneqq \widehat{\phi}_T \cup A^\star
  $
  and define
  $
    \tau^\star
    \coloneqq
    \max\{\varepsilon T,\,R_{T,K}^-(\widehat{\phi},\sigma)\}
  $.
  By construction, $\tau^\star$ is one of the threshold candidate set $\Lambda(T)$ considered by the algorithm:
  if $R_{T,K}^-(\widehat{\phi},\sigma)=0$, then $\tau^\star=\varepsilon T$;
  otherwise, $R_{T,K}^-(\widehat{\phi},\sigma)$ is the processing time
  $p_{\sigma(j),j}$ of some missed heavy job $j$.

  Now we claim that the residual LP with threshold $\tau^\star$ for $\psi^\star$ is feasible with
  load bound
  $
    L\coloneqq C_{\max}(\sigma)+R_T^+(\widehat{\phi},\sigma)
  $.
  Consider the fractional assignment that sends every residual job
  $j\notin \operatorname{dom}(\psi^\star)$ to machine $\sigma(j)$.
  We verify feasibility of this assignment for the residual LP with threshold $\tau^\star$.

  Let $j\notin \operatorname{dom}(\psi^\star)$.
  If $p_{\sigma(j),j}\le \varepsilon T$, then clearly
  $p_{\sigma(j),j}\le \tau^\star$.
  Otherwise, $p_{\sigma(j),j}>\varepsilon T$. Since $j\notin \operatorname{dom}(\psi^\star)$,
  we have $\widehat{\phi}_T(j)=\bot$ and $j\notin U^\star$.
  Hence $j\in \mathrm{Miss}_{T}(\predphi*,\sigma)\setminus U^\star$, and therefore we have
  $
    p_{\sigma(j),j}
    \le
    R_{T,K}^-(\widehat{\phi},\sigma)
    \le
    \tau^\star
  $.
  Thus, in all cases, the edge $(\sigma(j),j)$ is allowed in the
  residual LP with threshold $\tau^\star$.

  We now verify the machine-capacity constraints.
  Fix a machine $i$.
  Under the assignment that follows $\psi^\star$ on fixed jobs and $\sigma$ on residual jobs, the total load on machine $i$ is
  \begin{equation}\label{eq:total-load-on-machine-i-under-sigma}
    \begin{aligned}
      \ell_i^{\mathrm{fix}}(\psi^\star)
      +
      \hspace{-8pt}\sum_{\substack{j:\sigma(j)=i,\\ \psi^\star(j)=\bot}}\!\! p_{ij}
      &=
      \hspace{-10pt}\sum_{\substack{j:\sigma(j)=i\\ \psi^\star(j)\in\{\bot,i\}}}\hspace{-10pt} p_{ij}
      +
      \hspace{-8pt}\sum_{\substack{j:\psi^\star(j)=i\\ \sigma(j)\neq i}}\hspace{-4pt} p_{ij}
      =
      \hspace{-10pt}\sum_{\substack{j:\sigma(j)=i\\ \psi^\star(j)\in\{\bot,i\}}}\hspace{-10pt} p_{ij}
      +
      \hspace{-10pt}\sum_{\substack{j:\widehat{\phi}_T(j)=i\\ \widehat{\phi}_T(j)\neq \sigma(j)}}\hspace{-8pt} p_{ij}
      \\
      &\le
      \hspace{-5pt}\sum_{j:\sigma(j)=i}\! p_{ij}
      +
      \hspace{-10pt}\sum_{\substack{j:\widehat{\phi}_T(j)=i\\ \widehat{\phi}_T(j)\neq \sigma(j)}}\hspace{-8pt} p_{ij}
      \le
      C_{\max}(\sigma) + R_T^+(\widehat{\phi},\sigma)
      =: L.
    \end{aligned}
  \end{equation}
  Here, the first equality uses
  \begin{equation}
    \ell_i^{\mathrm{fix}}(\psi^\star)
    =
    \!\!\sum_{j:\psi^\star(j)=i} p_{ij}
    =
    \!\!\sum_{\substack{j:\psi^\star(j)=i\\ \sigma(j)=i}} p_{ij}
    +
    \!\!\sum_{\substack{j:\psi^\star(j)=i\\ \sigma(j)\neq i}} p_{ij},
  \end{equation}
  and the second equality holds because every job in $A^\star$ is assigned according to $\sigma$ by construction, so the only fixed jobs that may increase the load of machine $i$ relative to $\sigma$ are those coming from $\widehat{\phi}_T$.

  Thus, from \cref{eq:total-load-on-machine-i-under-sigma}, if we define
  \begin{equation}
    x_{ij}=
    \begin{cases}
    1, & \text{if } i=\sigma(j),\\[-3pt]
    0, & \text{otherwise},
    \end{cases}
    \; j\notin \operatorname{dom}(\psi^\star),
  \end{equation}
  we have
  \begin{equation}
    \sum_{j\notin \operatorname{dom}(\psi^\star)}\hspace{-6pt} p_{ij}x_{ij}
    =
    \!\!\! \sum_{\substack{j:\sigma(j)=i,\\ \psi^\star(j)=\bot}}\!\! p_{ij}
    \le
    L-\ell_i^{\mathrm{fix}}(\psi^\star).
  \end{equation}
  Hence the residual jobs, when assigned according to $\sigma$, form a feasible
  solution to the residual LP with load bound $L$.
  In particular, the residual LP is feasible and its optimum value satisfies
  $
    L_{\psi^\star}^{\tau^\star}\le L
  $.

  By the classical rounding argument (Theorem 2.1 in \citet{Shmoys1993}), the algorithm
  can round this residual solution into an integral schedule of makespan at most
  $
    L_{\psi^\star}^{\tau^\star}+\tau^\star
  $.
  Therefore, for the candidate pair $(A^\star,\tau^\star)$, the algorithm obtains
  a schedule of makespan at most
  \begin{equation}
    C_{\max}(\sigma)
    +
    R_T^+(\widehat{\phi},\sigma)
    +
    \max\{\varepsilon T,\,R_{T,K}^-(\widehat{\phi},\sigma)\}.
  \end{equation}
  Since Algorithm~\ref{alg:inner-layer} returns the best schedule among all candidates,
  its output satisfies the same bound.

  It remains to analyze the running time.
  The computation of $\widehat{\phi}_T$ takes polynomial time.
  The number of augmentations considered by the algorithm is at most
  \begin{equation}
    \sum_{s=0}^{K} \binom{n}{s}m^s
    \le
    (K+1)(mn)^K
    \le
    (n+1)(mn)^K.
  \end{equation}
  For each augmentation $A$, the algorithm iterates over the threshold set $\Lambda(T)$,
  whose size is at most $mn+1$.
  For each pair $(A,\tau)$, it forms the combined assignment, solves the residual
  LP, and, if feasible, applies LST rounding. Each of these steps takes time
  polynomial in $m$, $n$, $\log P$, and $1/\varepsilon$.
  Hence the total running time is
  $
    O((mn)^{K}\cdot \mathrm{poly}(m,n,\log P,1/\varepsilon))
  $.
\end{proof}

\subsection{Proofs of \cref{cor:outer-layer-guarantee-asymmetric,cor:outer-layer-guarantee-k-free}}
\label{apx:proof-of-outer-corollaries}

\begin{proof}[Proof of \cref{cor:outer-layer-guarantee-asymmetric}]
  Since $K \ge |\mathrm{Miss}_{\OPT}(\widehat{\phi},\sigma^\star)|$, all missed $\varepsilon \OPT$-heavy assignments can be recovered by the augmentation step. Hence
  $
    R_{\OPT,K}^-(\widehat{\phi},\sigma^\star)=0
  $.
  Applying \cref{thm:outer-layer-guarantee}, we obtain the claimed bound immediately.
  The running-time bound is inherited directly from \cref{thm:outer-layer-guarantee}.
\end{proof}

\begin{proof}[Proof of \cref{cor:outer-layer-guarantee-k-free}]
  Set $K=0$ in \cref{thm:outer-layer-guarantee}.
  Since we have $
  R_{\OPT,0}^-(\widehat{\phi},\sigma^\star)
  =
  R_{\OPT}^-(\widehat{\phi},\sigma^\star)
  $ by definition,
  \cref{thm:outer-layer-guarantee} gives the claimed bound immediately.
  The running-time bound also follows immediately by substituting $K=0$ into the running time of \cref{thm:outer-layer-guarantee}.
\end{proof}

\section{Additional discussions}

\subsection{Additional discussions for the inner layer}
\label{apx:additional-inner-layer-analysis}

The following corollary is obtained as the special case of \cref{thm:inner-layer-guarantee} in which the search budget $K$ is sufficiently large.

\begin{corollary}
  \label[corollary]{col:inner-layer-guarantee-asymmetric}
  Suppose there exists a schedule $\sigma$ such that
  $
    K \ge |\mathrm{Miss}_T(\widehat{\phi},\sigma)|
  $.
  Then, \cref{alg:inner-layer} returns a feasible schedule $S$ satisfying
  \begin{equation}
  C_{\max}(S)
  \le
  C_{\max}(\sigma) + \varepsilon T + R_T^+(\widehat{\phi},\sigma).
  \end{equation}
  The running time of the algorithm is
  $
    O((mn)^K\cdot \mathrm{poly}(m,n,\log P,1/\varepsilon)).
  $
\end{corollary}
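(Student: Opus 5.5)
The plan is to obtain the corollary as an immediate specialization of \cref{thm:inner-layer-guarantee}, applied with the given schedule $\sigma$ as the comparison schedule. The only work is to show that the hypothesis $K \ge |\mathrm{Miss}_T(\widehat{\phi},\sigma)|$ forces the residual false-negative term to vanish. We then simplify the maximum in the bound.

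First, we will check that $R_{T,K}^-(\widehat{\phi},\sigma)=0$. By definition, $R_{T,K}^-$ is a minimum over subsets $U\subseteq \mathrm{Miss}_T(\widehat{\phi},\sigma)$ with $|U|\le K$. Since $|\mathrm{Miss}_T(\widehat{\phi},\sigma)|\le K$, the choice $U=\mathrm{Miss}_T(\widehat{\phi},\sigma)$ is admissible. For this choice the inner maximum ranges over the empty set, which by the stated convention equals $0$. All processing times are positive, so the minimum is nonnegative, and hence $R_{T,K}^-(\widehat{\phi},\sigma)=0$. Equivalently, there is no $(K+1)$-st largest missed heavy job.

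Next, we substitute this into \cref{thm:inner-layer-guarantee}. Since $\varepsilon T>0$, we get $\max\{\varepsilon T, R_{T,K}^-(\widehat{\phi},\sigma)\}=\varepsilon T$. The theorem then yields
\[
C_{\max}(S)\le C_{\max}(\sigma)+\varepsilon T+R_T^+(\widehat{\phi},\sigma).
\]
Feasibility of $S$ and the running-time bound $O((mn)^K\cdot\mathrm{poly}(m,n,\log P,1/\varepsilon))$ carry over verbatim, because the algorithm itself is unchanged.

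There is no real obstacle here. The only point needing care is the empty-maximum convention, which guarantees that recovering every missed heavy job gives the value $0$ rather than an undefined quantity. Note also that the theorem holds for every $\sigma$, so the existential phrasing in the hypothesis causes no issue: we simply apply the theorem to the witnessing $\sigma$.
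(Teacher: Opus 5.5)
Your proposal is correct and takes the same approach as the paper: the hypothesis $K \ge |\mathrm{Miss}_T(\widehat{\phi},\sigma)|$ allows the choice $U=\mathrm{Miss}_T(\widehat{\phi},\sigma)$, which forces $R_{T,K}^-(\widehat{\phi},\sigma)=0$, and the bound and running time then follow directly from \cref{thm:inner-layer-guarantee}. Your explicit appeal to the empty-maximum convention and to $\varepsilon T>0$ only spells out steps the paper leaves implicit.
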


\begin{proof}
  Since $K \ge |\mathrm{Miss}_T(\widehat{\phi},\sigma)|$, the algorithm can recover all missed heavy assignments. Hence, by the definition, we have
  $
  R_{T,K}^-(\widehat{\phi},\sigma)=0
  $.
  Applying \cref{thm:inner-layer-guarantee}, we obtain the claimed bound immediately.
  The running-time bound is inherited directly from Theorem~\ref{thm:inner-layer-guarantee}.
\end{proof}

\paragraph{Interpretation.}
\cref{col:inner-layer-guarantee-asymmetric} reveals a genuine asymmetry between the two types of prediction error.
False-negative heavy assignments affect the running time through the augmentation budget $K$, since they must be recovered by additional local search.
On the other hand, false-positive heavy assignments do not enlarge the search space, but instead consume machine capacity and therefore appear directly in the makespan bound through $R_T^+(\widehat{\phi},\sigma)$.

The required running time to satisfy the condition degrades smoothly with the amount of false-negative heavy predictions, since the required augmentation budget $K$ grows with the number of missed-heavy jobs $|\mathrm{Miss}_T(\widehat{\phi},\sigma)|$.
At the same time, even when the number of false negatives is large, the search remains in the same asymptotic regime as the classical algorithm based on exhaustive search over heavy assignments~\citep{Lenstra1990,mordechai2015optimization}.

On the quality side, if the false-positive heavy assignments are negligible, then the algorithm achieves a $(1 + \varepsilon)OPT$ makespan guarantee, matching the classical completion-based guarantee with exhaustive search~\citep{Lenstra1990,mordechai2015optimization}. As the false-positive load increases, the makespan guarantee deteriorates smoothly through the additive overload term $R_T^+(\widehat{\phi},\sigma)$.

As the opposite extreme, when no search is performed, \ie, $K=0$, we obtain the following corollary.

\begin{corollary}
  \label[corollary]{cor:inner-layer-guarantee-k-free}
  When $K=0$, for any schedule $\sigma$, \cref{alg:inner-layer} returns a feasible schedule $S$ satisfying
  \begin{equation}
  C_{\max}(S)
  \le
  C_{\max}(\sigma) + \max\{\varepsilon T,\, R_T^-(\widehat{\phi},\sigma)\} + R_T^+(\widehat{\phi},\sigma).
  \end{equation}
  In particular, if $\sigma$ has makespan at most $T$, then the returned schedule $S$ satisfies
  \begin{equation}
    \begin{aligned}
      C_{\max}(S)
      &\le
      T + \max\{\varepsilon T,\, R_T^-(\widehat{\phi},\sigma)\} + R_T^+(\widehat{\phi},\sigma)\\
      &\le
      (1 + \varepsilon)T + R_T^-(\widehat{\phi},\sigma) + R_T^+(\widehat{\phi},\sigma).
    \end{aligned}
  \end{equation}
  The running time of the algorithm is
  $
    O(\mathrm{poly}(m,n,\log P,1/\varepsilon)).
  $
\end{corollary}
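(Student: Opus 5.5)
The plan is to obtain \cref{cor:inner-layer-guarantee-k-free} as a direct specialization of \cref{thm:inner-layer-guarantee} with $K=0$, followed by two elementary inequalities.

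The first step is to set $K=0$ in \cref{thm:inner-layer-guarantee}, which applies to any schedule $\sigma$. We then check that $R_{T,0}^-(\widehat{\phi},\sigma)=R_T^-(\widehat{\phi},\sigma)$. This holds because the only subset $U\subseteq\mathrm{Miss}_T(\widehat{\phi},\sigma)$ with $|U|\le 0$ is $U=\emptyset$. The outer minimum is therefore attained at $U=\emptyset$, and the remaining maximum over $\mathrm{Miss}_T(\widehat{\phi},\sigma)$ is exactly $R_T^-(\widehat{\phi},\sigma)$. Both quantities use the same convention of value $0$ when $\mathrm{Miss}_T$ is empty. Substituting this identity into the bound of \cref{thm:inner-layer-guarantee} gives the first displayed inequality.

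Next, we assume $C_{\max}(\sigma)\le T$ and replace $C_{\max}(\sigma)$ by $T$, which yields the first line of the second display. For the final line we use $\max\{a,b\}\le a+b$ for nonnegative $a=\varepsilon T$ and $b=R_T^-(\widehat{\phi},\sigma)$. Here $b\ge 0$ because it is either a maximum of positive processing times or $0$ by convention.

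For the running time, we substitute $K=0$ into $O((mn)^K\mathrm{poly}(m,n,\log P,1/\varepsilon))$, so the factor $(mn)^0=1$ disappears. Concretely, the only augmentation the algorithm enumerates is the empty one, and there are at most $mn+1$ thresholds, each requiring one LP solve and one rounding. We do not expect a real obstacle. The only point needing care is the boundary convention when $\mathrm{Miss}_T$ is empty, which we handle explicitly above.
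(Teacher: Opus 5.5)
Your proposal is correct and follows the paper's proof. You specialize \cref{thm:inner-layer-guarantee} to $K=0$, use $R_{T,0}^-(\widehat{\phi},\sigma)=R_T^-(\widehat{\phi},\sigma)$ (only $U=\emptyset$ is admissible), read off the running time from $(mn)^0=1$, and also spell out the ``in particular'' chain via $C_{\max}(\sigma)\le T$ and $\max\{a,b\}\le a+b$ for $a,b\ge 0$, which the paper leaves implicit.
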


\begin{proof}
  Apply \cref{thm:inner-layer-guarantee} with $K=0$.
  Since
  $
  R_{T,0}^-(\widehat{\phi},\sigma)
  =
  R_T^-(\widehat{\phi},\sigma)
  $
  by definition, we obtain the claimed makespan bound immediately.
  The running-time bound also follows immediately from \cref{thm:inner-layer-guarantee}, which gives
  $
  O((mn)^0\cdot \mathrm{poly}(m,n,\log P,1/\varepsilon))
  =
  O(\mathrm{poly}(m,n,\log P,1/\varepsilon))
  $.
\end{proof}

\paragraph{Interpretation.}
Corollary~\ref{cor:inner-layer-guarantee-k-free} eliminates the search requirement on $K$.
Instead of recovering missed heavy assignments combinatorially, the algorithm absorbs them into the approximation bound through $R_T^-(\widehat{\phi},\sigma)$, making the $(mn)^{O(K)}$ search disappear entirely.
Here, the upper bound deteriorates smoothly with both false-negative term $R_T^-(\widehat{\phi},\sigma)$ and false-positive term $R_T^+(\widehat{\phi},\sigma)$.

\section{A variant with editable predictions}
\label{apx:editable-variant}

The formulation we presented in \cref{sec:algorithm} trusts every predicted assignment that survives the $T$-heavy projection.
We now describe a more flexible variant in which the algorithm is allowed to delete, insert, or edit a bounded number of heavy predicted assignments.
This variant uses the same residual LP and rounding
procedure as before; the only change is the local neighborhood of heavy partial assignments searched by the inner layer.

\subsection{Editable heavy-assignment neighborhood}
\label{apx:editable-neighborhood}

We first define the neighborhood of heavy partial assignments considered by the inner layer.
Fix a threshold $T>0$. Recall that $\widehat{\phi}_T$ denotes the $T$-heavy
projection of the prediction. For a partial assignment
$\psi:[n]\to [m]\cup\{\bot\}$, define its edit distance from the projected
prediction at scale $T$ by
\begin{equation}
  d_T(\psi,\widehat{\phi})
  :=
  \left|
    \left\{
      j\in[n]
      \;\middle|\;
      \psi(j)\neq \widehat{\phi}_T(j)
    \right\}
  \right|.
\end{equation}
Thus, changing $\bot$ to a machine corresponds to an insertion, changing a predicted machine to $\bot$ corresponds to a deletion, and changing a predicted machine to another one corresponds to an edit.

For an edit budget $K\ge 0$, define the editable $K$-neighborhood of the
prediction by
\begin{equation}
  \mathcal{E}_{T,K}(\widehat{\phi})
  :=
  \left\{
    \psi\colon[n]\to[m]\cup\{\bot\}
    \;\middle|\;
    d_T(\psi,\widehat{\phi})\le K,\;
    \psi(j)\neq\bot
    \Rightarrow
    \varepsilon T < p_{\psi(j),j}
    \right\}.
\end{equation}
The editable inner-layer algorithm enumerates all
$\psi\in\mathcal{E}_{T,K}(\widehat{\phi})$ in Step 2, in place of all augmentations defined in \cref{eq:augmentation-constraint}.
The rest of the algorithm is unchanged.
For each such $\psi$, it fixes the
jobs in $\operatorname{dom}(\psi)$, solves the same residual LP over thresholds
$\tau\in\Lambda(T)$, applies Shmoys--Tardos rounding whenever the LP is
feasible, and returns the best schedule among all candidates.

\subsection{Edit-aware prediction error measures}

We next define prediction-error measures for the editable variant. The key
point is that deletions and edits may trade false positives against false
negatives: deleting an incorrectly fixed heavy assignment can reduce
false-positive overload, but if the job is truly heavy in the comparison
schedule, it may become a missed heavy assignment. Therefore, the two components
must be defined with respect to the same edited prediction.

Accordingly, we use the same false-negative and false-positive error measures as in
\cref{eq:fn-fp-error-measures}, but evaluate them on the edited partial
assignment $\psi$ instead of the projected prediction $\widehat{\phi}_T$.
Concretely, fix an arbitrary minimizer
\begin{equation}
\psi^\mathrm{edit}_{T,K}(\widehat{\phi},\sigma)
\in
\operatorname*{arg\,min}_{\psi\in\mathcal{E}_{T,K}(\widehat{\phi})}
\left[
\max\{\varepsilon T,R^-_T(\psi,\sigma)\}
+
R^+_T(\psi,\sigma)
\right],
\end{equation}
where ties are broken deterministically.
We then define the edit-aware
false-negative and false-positive components by
\begin{equation}
R^{-,\mathrm{edit}}_{T,K}(\widehat{\phi},\sigma)
:=
R^-_T
\left(
\psi^\mathrm{edit}_{T,K}(\widehat{\phi},\sigma),
\sigma
\right),
\qquad
R^{+,\mathrm{edit}}_{T,K}(\widehat{\phi},\sigma)
:=
R^+_T
\left(
\psi^\mathrm{edit}_{T,K}(\widehat{\phi},\sigma),
\sigma
\right).
\end{equation}

Equivalently, it is simpler to define the edit-aware combined prediction error directly as
\begin{equation}
\Delta^\mathrm{edit}_{T,K}(\widehat{\phi},\sigma)
:=
\min_{\psi\in\mathcal{E}_{T,K}(\widehat{\phi})}
\left[
\max\{\varepsilon T,R^-_T(\psi,\sigma)\}
+
R^+_T(\psi,\sigma)
\right].
\end{equation}

By definition, we have
\begin{equation}
\Delta^\mathrm{edit}_{T,K}(\widehat{\phi},\sigma)
=
\max\left\{
\varepsilon T,
R^{-,\mathrm{edit}}_{T,K}(\widehat{\phi},\sigma)
\right\}
+
R^{+,\mathrm{edit}}_{T,K}(\widehat{\phi},\sigma).
\end{equation}
The quantity $\Delta^\mathrm{edit}_{T,K}$ is monotone nonincreasing in the
budget $K$, although
$R^{-,\mathrm{edit}}_{T,K}$ and $R^{+,\mathrm{edit}}_{T,K}$ need not be
individually monotone because an edit may trade one type of error for the other.


\subsection{Inner-layer guarantee for the editable variant}

\begin{theorem}[Inner-layer guarantee with editable predictions]
\label{thm:inner-edit}
Fix $\varepsilon\in(0,1)$, a target value $T>0$, an edit budget $K\ge 0$, and a
prediction $\widehat{\phi}$.
Consider the editable inner-layer algorithm that
enumerates all $\psi\in\mathcal{E}_{T,K}(\widehat{\phi})$ in place of all augmentations defined in \cref{eq:augmentation-constraint}.

Then, for any schedule $\sigma$, the algorithm returns a feasible schedule $S$
satisfying
\begin{equation}
C_{\max}(S)
\le
C_{\max}(\sigma)
+
\max\left\{
\varepsilon T,
R^{-,\mathrm{edit}}_{T,K}(\widehat{\phi},\sigma)
\right\}
+
R^{+,\mathrm{edit}}_{T,K}(\widehat{\phi},\sigma).
\end{equation}
Equivalently,
\begin{equation}
C_{\max}(S)
\le
C_{\max}(\sigma)
+
\Delta^\mathrm{edit}_{T,K}(\widehat{\phi},\sigma).
\end{equation}
The running time of the algorithm is
$
O\!\left(
(mn)^K \operatorname{poly}(m,n,\log P,1/\varepsilon)
\right).
$
\end{theorem}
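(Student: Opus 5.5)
The argument mirrors the proof of \cref{thm:inner-layer-guarantee}, with the augmentation $\psi^\star=\widehat{\phi}_T\cup A^\star$ replaced by the edit-aware minimizer $\psi^{\mathrm{e}}\coloneqq\psi^{\mathrm{edit}}_{T,K}(\widehat{\phi},\sigma)\in\mathcal{E}_{T,K}(\widehat{\phi})$. By definition the algorithm enumerates $\psi^{\mathrm{e}}$. The main difficulty is that $R^-_T(\psi^{\mathrm{e}},\sigma)$ and $R^+_T(\psi^{\mathrm{e}},\sigma)$ are the measures of \cref{eq:fn-fp-error-measures} evaluated with $\psi$ in place of $\widehat{\phi}_T$. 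I will therefore read them as follows: $\mathrm{Miss}_T(\psi,\sigma)=\{j\mid p_{\sigma(j),j}>\varepsilon T,\ \psi(j)=\bot\}$, and $R^+_T(\psi,\sigma)=\max_i\sum_{j:\psi(j)=i\neq\sigma(j)}p_{ij}$. This reading is consistent because every $\psi\in\mathcal{E}_{T,K}$ contains only heavy pairs, so its own heavy projection is itself. With this reading, no $K$-residual search is needed inside $\psi^{\mathrm{e}}$: the search budget has already been spent on choosing $\psi^{\mathrm{e}}$.

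Set $\tau^\star\coloneqq\max\{\varepsilon T,R^-_T(\psi^{\mathrm{e}},\sigma)\}$. Then $\tau^\star\in\Lambda(T)$: either it equals $\varepsilon T$, or it equals $p_{\sigma(j),j}>\varepsilon T$ for some missed job $j$. Assign every residual job $j\notin\dom(\psi^{\mathrm{e}})$ to $\sigma(j)$ and check that each such edge is admissible. If $p_{\sigma(j),j}\le\varepsilon T$, this is immediate. Otherwise $j\in\mathrm{Miss}_T(\psi^{\mathrm{e}},\sigma)$, so $p_{\sigma(j),j}\le R^-_T(\psi^{\mathrm{e}},\sigma)\le\tau^\star$.

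For the capacity constraints, repeat the decomposition of \cref{eq:total-load-on-machine-i-under-sigma}. Split the fixed load $\ell^{\mathrm{fix}}_i(\psi^{\mathrm{e}})$ into jobs with $\sigma(j)=i$ and jobs with $\sigma(j)\neq i$. Together with the residual jobs assigned by $\sigma$ to $i$, the first group is a subset of $\sigma$'s load on $i$. The second group is exactly the overload term defining $R^+_T(\psi^{\mathrm{e}},\sigma)$. Hence the residual LP for $(\psi^{\mathrm{e}},\tau^\star)$ is feasible with $L\le C_{\max}(\sigma)+R^+_T(\psi^{\mathrm{e}},\sigma)$. Shmoys--Tardos rounding, as used in the proof of \cref{thm:inner-layer-guarantee}, then produces a schedule of makespan at most $L+\tau^\star$. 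Since the algorithm returns the best candidate, we obtain the stated bound, and the $\Delta^{\mathrm{edit}}_{T,K}$ form follows from its identity with the two components.

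For the running time, the elements of $\mathcal{E}_{T,K}(\widehat{\phi})$ are specified by choosing at most $K$ jobs and, for each, a new value in $[m]\cup\{\bot\}$. This gives at most $\sum_{s\le K}\binom{n}{s}(m+1)^s\le(K+1)(n(m+1))^K=O((mn)^K)$ candidates, up to the $(1+1/m)^K$ and $K+1$ factors, which are constants for constant $K$. The heaviness filter takes polynomial time, and each candidate is processed with $|\Lambda(T)|\le mn+1$ LPs and roundings, exactly as before. The main point to get right is the interpretation of $R^\pm_T(\psi,\sigma)$ for a general partial assignment $\psi$ with deletions. Once that is fixed as above, the proof is a direct transcription of the augmentation case.
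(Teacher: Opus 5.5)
Your proposal is correct and follows the paper's proof essentially step for step: you take the edit-aware minimizer $\psi^{\mathrm{edit}}_{T,K}(\widehat{\phi},\sigma)\in\mathcal{E}_{T,K}(\widehat{\phi})$, set $\tau^\star=\max\{\varepsilon T,R^-_T(\psi,\sigma)\}\in\Lambda(T)$, route residual jobs along $\sigma$, split the fixed load into jobs that agree and disagree with $\sigma$, apply Shmoys--Tardos rounding, and count $\sum_{s\le K}\binom{n}{s}(m+1)^s$ candidates. You also observe explicitly that every $\psi\in\mathcal{E}_{T,K}(\widehat{\phi})$ equals its own $T$-heavy projection, which justifies the reading of $R^{\pm}_T(\psi,\sigma)$ that the paper uses without comment.
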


\begin{proof}
Let
$
\psi^\star
:=
\psi^\mathrm{edit}_{T,K}(\widehat{\phi},\sigma)
$
be an edited partial assignment attaining the definition of the edit-aware
error measures. Since
$\psi^\star\in\mathcal{E}_{T,K}(\widehat{\phi})$, it is one of the candidates
enumerated by the editable inner-layer algorithm.

Define
$
\tau^\star
:=
\max\left\{
\varepsilon T,
R^{-,\mathrm{edit}}_{T,K}(\widehat{\phi},\sigma)
\right\}
$.
We first note that $\tau^\star\in\Lambda(T)$. If
$R^{-,\mathrm{edit}}_{T,K}(\widehat{\phi},\sigma)=0$, then
$\tau^\star=\varepsilon T$. Otherwise,
$R^{-,\mathrm{edit}}_{T,K}(\widehat{\phi},\sigma)$ is equal to
$p_{\sigma(j),j}$ for some missed heavy job $j$, and hence appears in the discrete threshold set $\Lambda(T)$.

Consider the residual LP for the fixed assignment $\psi^\star$ and threshold
$\tau^\star$. Assign every residual job $j\notin\operatorname{dom}(\psi^\star)$
to the machine $\sigma(j)$. We verify that each such edge is allowed. If
$p_{\sigma(j),j}\le \varepsilon T$, then
$p_{\sigma(j),j}\le \tau^\star$. Otherwise
$p_{\sigma(j),j}>\varepsilon T$. Since $j$ is residual under $\psi^\star$, we
have $\psi^\star(j)=\bot$, and therefore
$j\in\operatorname{Miss}_T(\psi^\star,\sigma)$. Hence
$
p_{\sigma(j),j}
\le
R^-_T(\psi^\star,\sigma)
=
R^{-,\mathrm{edit}}_{T,K}(\widehat{\phi},\sigma)
\le
\tau^\star
$.
Thus the assignment of residual jobs according to $\sigma$ uses only edges allowed by the residual LP.

We next bound the load. Fix a machine $i$. Under the assignment that follows
$\psi^\star$ on fixed jobs and follows $\sigma$ on residual jobs, the load on
machine $i$ is
\begin{equation}
  \ell^{\mathrm{fix}}_i(\psi^\star)
  +
  \sum_{\substack{
  j:\sigma(j)=i\\
  \psi^\star(j)=\bot
  }}
  p_{ij}
  =
  \!\!\!\sum_{\substack{
  j:\sigma(j)=i\\
  \psi^\star(j)\in\{\bot,i\}
  }}
  \!\!\!p_{ij}
  +
  \!\!\!\sum_{\substack{
  j:\psi^\star(j)=i\\
  \psi^\star(j)\neq \sigma(j)
  }}
  \!\!\!p_{ij}.
  \end{equation}

  \vspace{-12pt}
  Therefore,
  \begin{equation}
  \ell^{\mathrm{fix}}_i(\psi^\star)
  +
  \!\!\!\sum_{\substack{
  j:\sigma(j)=i\\
  \psi^\star(j)=\bot
  }}
  \!\!\!p_{ij}
  \le
  \!\!\sum_{j:\sigma(j)=i}
  \!\!p_{ij}
  +
  R^+_T(\psi^\star,\sigma)
  \le
  C_{\max}(\sigma)
  +
  R^{+,\mathrm{edit}}_{T,K}(\widehat{\phi},\sigma).
\end{equation}
Hence the residual LP for $(\psi^\star,\tau^\star)$ is feasible with load bound
at most
$
C_{\max}(\sigma)
+
R^{+,\mathrm{edit}}_{T,K}(\widehat{\phi},\sigma)
$.

By the Shmoys--Tardos rounding guarantee, rounding the residual LP solution
increases the load of every machine by at most $\tau^\star$. Thus, for the
candidate $\psi^\star$, the algorithm obtains a schedule of makespan at most

\vspace{-15pt}
\begin{equation}
C_{\max}(\sigma)
+
R^{+,\mathrm{edit}}_{T,K}(\widehat{\phi},\sigma)
+
\max\left\{
\varepsilon T,
R^{-,\mathrm{edit}}_{T,K}(\widehat{\phi},\sigma)
\right\}.
\end{equation}
Since the algorithm returns the best candidate, the same bound holds for its
output.

It remains to bound the running time. The number of edited partial assignments
in $\mathcal{E}_{T,K}(\widehat{\phi})$ is at most
$
\sum_{s=0}^{K}
\binom{n}{s}(m+1)^s
=
O((mn)^K).
$
For each candidate $\psi$, the algorithm considers at most $mn+1$ thresholds in
$\Lambda(T)$, and for each pair $(\psi,\tau)$ it solves one LP and performs one
rounding step, both in polynomial time. This gives the claimed running time.
\end{proof}

\paragraph{Interpretation.}
\cref{thm:inner-edit} shows that allowing edits changes the role of the
budget $K$ from a false-negative recovery budget into a general correction budget for the predicted heavy-assignment structure.
In the original augmentation-only variant, the algorithm can only insert missed heavy assignments; hence false positives remain fixed and appear directly in the makespan bound.
In contrast, the editable variant can spend part of its budget to delete an incorrect prediction or to edit it to another machine.
Thus, both false negatives and false positives can be repaired by local search.

The guarantee should be interpreted through the combined error
$\smash{\Delta^{\mathrm{edit}}_{T,K}(\widehat{\phi},\sigma)}$.
This quantity is monotone nonincreasing in $K$, because a larger budget only enlarges the edit neighborhood $\smash{\mathcal{E}_{T,K}(\widehat{\phi})}$.
However, the two components
$R^{-,\mathrm{edit}}_{T,K}(\widehat{\phi},\sigma)$ and
$R^{+,\mathrm{edit}}_{T,K}(\widehat{\phi},\sigma)$ need not be individually monotone.
For example, deleting a false-positive heavy assignment may reduce
false-positive overload, but if that job is heavy in the comparison schedule,
it may become a missed heavy assignment. This is why the theorem defines the
edit-aware error by minimizing the sum
\begin{equation}
  \max\{\varepsilon T,R^-_T(\psi,\sigma)\}+R^+_T(\psi,\sigma)
\end{equation}
over a common edited partial assignment $\psi$.

Consequently, $K$ provides a smooth tradeoff between running time and the
quality of the edited prediction. Larger values of $K$ allow the algorithm to
search a larger neighborhood of the prediction and can only improve the
combined prediction-dependent bound, while increasing the running time by the
factor $(mn)^K$. When $K=0$, the editable variant reduces to trusting the
projected prediction $\predphi_T$ without any local correction; as $K$
increases, the algorithm gradually interpolates toward the best completion
obtainable after correcting up to $K$ heavy predicted assignments.

\subsection{Outer-layer guarantee for the editable variant}

We next show that the outer-layer guarantee for the augmentation-only variant
(\cref{thm:outer-layer-guarantee}) also extends to the editable variant with the same proof structure, by simply replacing the augmentation-based error measures with their edit-aware counterparts.

\begin{theorem}[Outer-layer guarantee with editable predictions]
\label{thm:outer-edit}
Fix $\varepsilon\in(0,1)$, an edit budget $K\ge 0$, and a prediction $\widehat{\phi}$. Let $\sigma^\star$ be an optimal schedule, and write $\OPT:=C_{\max}(\sigma^\star)$.
The editable outer-layer algorithm returns a feasible schedule $S$ satisfying
\begin{equation}
  C_{\max}(S)
  \le
  \min\left\{
  \OPT
  +
  \max\left\{
  \varepsilon\OPT,
  R^{-,\mathrm{edit}}_{\OPT,K}
  (\widehat{\phi},\sigma^\star)
  \right\}
  +
  R^{+,\mathrm{edit}}_{\OPT,K}
  (\widehat{\phi},\sigma^\star),
  \;
  2\OPT
  \right\}.
\end{equation}
Equivalently,
\begin{equation}
  \smash{
    C_{\max}(S)
    \le
    \min\left\{
    \OPT
    +
    \Delta^\mathrm{edit}_{\OPT,K}
    (\widehat{\phi},\sigma^\star),
    \;
    2\OPT
    \right\}.
  }
\end{equation}
In particular,
\begin{equation}
  C_{\max}(S)
  \le
  \min\left\{
  (1+\varepsilon)\OPT
  +
  R^{-,\mathrm{edit}}_{\OPT,K}
  (\widehat{\phi},\sigma^\star)
  +
  R^{+,\mathrm{edit}}_{\OPT,K}
  (\widehat{\phi},\sigma^\star),
  \;
  2\OPT
  \right\}.
\end{equation}
The running time of the algorithm is
$
  O\!\left(
  (mn)^K \operatorname{poly}(m,n,\log P,1/\varepsilon)
  \right)
$.
\end{theorem}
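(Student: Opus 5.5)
The proof will mirror that of \cref{thm:outer-layer-guarantee}, with \cref{thm:inner-edit} taking the place of \cref{thm:inner-layer-guarantee}. Fix an optimal schedule $\sigma^\star$. The baseline satisfies $B\le 2\OPT$ and $B\ge\OPT$, so $\OPT\in[B/2,B]$. Let $T^\star$ be the largest element of $\mathcal T$ with $T^\star\le\OPT$; it exists because $B/2\in\mathcal T$. No value $p_{ij}/\varepsilon$ lies in $(T^\star,\OPT]$. If such a value existed it would lie in $[B/2,B]$, hence belong to $\mathcal T$, contradicting the maximality of $T^\star$. Consequently, for every pair $(i,j)$ we have $p_{ij}>\varepsilon T^\star$ if and only if $p_{ij}>\varepsilon\OPT$.

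The key step is to check that every edit-aware quantity coincides at the scales $T^\star$ and $\OPT$. The heavy-projection $\widehat{\phi}_T$ depends on $T$ only through the predicate $p_{ij}>\varepsilon T$, so $\widehat{\phi}_{T^\star}=\widehat{\phi}_{\OPT}$. For the same reason $d_{T^\star}=d_{\OPT}$, and the heaviness condition in the definition of $\mathcal E_{T,K}$ is unchanged, so $\mathcal{E}_{T^\star,K}(\widehat{\phi})=\mathcal{E}_{\OPT,K}(\widehat{\phi})$. For each $\psi$ in this common set, $\mathrm{Miss}_T(\psi,\sigma^\star)$ and $R^{\pm}_T(\psi,\sigma^\star)$ also depend only on the heaviness predicate and on $\psi$ itself (here $\psi$ plays the role of $\widehat\phi_T$, unprojected), so they agree at the two scales.

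The one term that does depend on $T$ continuously is $\varepsilon T$, which appears inside the objective defining $\psi^{\mathrm{edit}}$. For this reason I would not claim that the minimizers $\psi^{\mathrm{edit}}_{T^\star,K}$ and $\psi^{\mathrm{edit}}_{\OPT,K}$ coincide; their tie-breaking and the $\varepsilon T$ term could differ. Instead I would apply the proof of \cref{thm:inner-edit} directly with $T=T^\star$ and the candidate $\psi:=\psi^{\mathrm{edit}}_{\OPT,K}(\widehat{\phi},\sigma^\star)$, which is enumerated by the algorithm because the neighborhoods are equal. That argument works for any enumerated $\psi$ and yields a schedule of makespan at most
\[
\OPT+\max\{\varepsilon T^\star,R^-_{T^\star}(\psi,\sigma^\star)\}+R^+_{T^\star}(\psi,\sigma^\star).
\]
This equals the same expression at scale $\OPT$ with $\varepsilon T^\star\le\varepsilon\OPT$, giving $\OPT+\Delta^{\mathrm{edit}}_{\OPT,K}$ and hence the first stated form. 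The threshold $\tau=\max\{\varepsilon T^\star,R^-\}$ lies in $\Lambda(T^\star)$ by the same reasoning as before. Alternatively, one can argue that $\Delta^{\mathrm{edit}}_{T^\star,K}\le\Delta^{\mathrm{edit}}_{\OPT,K}$, since for each fixed $\psi$ the objective is monotone in $T$ on this interval. The main obstacle, and the step needing care, is exactly this mismatch of minimizers; taking the $\OPT$-scale minimizer as the witness resolves it.

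Finally, I would take the minimum with $C_{\max}(S_0)\le 2\OPT$, since the output is the best candidate. The third inequality follows from $\max\{a,b\}\le a+b$ for nonnegative $a,b$. For the running time, $|\mathcal T|\le mn+2$ and each inner call costs $O((mn)^K\mathrm{poly})$ by \cref{thm:inner-edit}, so the total remains $O((mn)^K\mathrm{poly}(m,n,\log P,1/\varepsilon))$.
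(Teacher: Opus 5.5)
Your proposal is correct and follows the paper's proof essentially step for step. You take the largest critical threshold $T^\star\le\OPT$, show that the heavy/short decomposition, $\widehat{\phi}_{T}$, and the edit neighborhood $\mathcal{E}_{T,K}(\widehat{\phi})$ coincide at $T^\star$ and $\OPT$, and then run the inner-layer argument at $T^\star$ on the $\OPT$-scale minimizer $\psi^{\mathrm{edit}}_{\OPT,K}(\widehat{\phi},\sigma^\star)$ with $\varepsilon T^\star\le\varepsilon\OPT$ --- exactly how the paper avoids the minimizer mismatch you flag (and your alternative via $\Delta^{\mathrm{edit}}_{T^\star,K}\le\Delta^{\mathrm{edit}}_{\OPT,K}$, invoking \cref{thm:inner-edit} as a black box, is also valid).
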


\vspace{-8pt}

\begin{proof}
Let $S_0$ be the baseline schedule computed by the classical
$2$-approximation, and let $B=C_{\max}(S_0)$. Then
$\OPT\in[B/2,B]$. Let $T^\star$ be the largest critical threshold in
the outer search satisfying $T^\star\le \OPT$.

By construction of the critical threshold set, there is no value of the form
$p_{ij}/\varepsilon$ in the open interval $(T^\star,\OPT]$. Therefore,
for every machine-job pair $(i,j)$, we have the equivalence
$
p_{ij}>\varepsilon T^\star
\Leftrightarrow
p_{ij}>\varepsilon\OPT
$.
Consequently, the heavy/short decomposition is identical at thresholds
$T^\star$ and $\OPT$. In particular, we have
$
\widehat{\phi}_{T^\star}
=
\widehat{\phi}_\OPT$, and
$
\mathcal{E}_{T^\star,K}(\widehat{\phi})
=
\mathcal{E}_{\OPT,K}(\widehat{\phi})
$.

Let
$
\psi^\circ
:=
\psi^\mathrm{edit}_{\OPT,K}
(\widehat{\phi},\sigma^\star)
$
be an edited partial assignment attaining the edit-aware error at scale
$\OPT$. Since
$\mathcal{E}_{T^\star,K}(\widehat{\phi})
=
\mathcal{E}_{\OPT,K}(\widehat{\phi})$, the inner call at threshold
$T^\star$ enumerates $\psi^\circ$.

Moreover, because the heavy/short decomposition is identical at
$T^\star$ and $\OPT$, the primitive missed-heavy set and the primitive
false-positive overload of $\psi^\circ$ with respect to $\sigma^\star$ are the
same at the two thresholds:
\begin{equation}
  \smash{R^-_{T^\star}(\psi^\circ,\sigma^\star)
  =
  R^-_\OPT(\psi^\circ,\sigma^\star),
  \qquad
  R^+_{T^\star}(\psi^\circ,\sigma^\star)
  =
  R^+_\OPT(\psi^\circ,\sigma^\star).}
\end{equation}
Applying the same inner-layer argument to the candidate $\psi^\circ$ at
threshold $T^\star$ gives a schedule $S_{T^\star}$ satisfying
\begin{equation}
  \smash{
  C_{\max}(S_{T^\star})
  \le
  \OPT
  +
  \max\left\{
  \varepsilon T^\star,
  R^-_\OPT(\psi^\circ,\sigma^\star)
  \right\}
  +
  R^+_\OPT(\psi^\circ,\sigma^\star).}
\end{equation}

Since $T^\star\le\OPT$, we have
$
\max\left\{
\varepsilon T^\star,
R^-_\OPT(\psi^\circ,\sigma^\star)
\right\}
\le
\max\left\{
\varepsilon\OPT,
R^-_\OPT(\psi^\circ,\sigma^\star)
\right\}
$.
By the definition of $\psi^\circ$, we have
$
R^-_\OPT(\psi^\circ,\sigma^\star)
=
R^{-,\mathrm{edit}}_{\OPT,K}
(\widehat{\phi},\sigma^\star)$
and
$
R^+_\OPT(\psi^\circ,\sigma^\star)
=
R^{+,\mathrm{edit}}_{\OPT,K}
(\widehat{\phi},\sigma^\star)
$.
Therefore, it follows that
\begin{equation}
  C_{\max}(S_{T^\star})
  \le
  \OPT
  +
  \max\left\{
  \varepsilon\OPT,
  R^{-,\mathrm{edit}}_{\OPT,K}
  (\widehat{\phi},\sigma^\star)
  \right\}
  +
  R^{+,\mathrm{edit}}_{\OPT,K}
  (\widehat{\phi},\sigma^\star).
\end{equation}
The outer algorithm returns the best schedule among all inner-layer candidates
and the baseline schedule $S_0$. Since
$C_{\max}(S_0)\le 2\OPT$, the claimed bound follows.

Finally, the outer search considers at most $mn+2$ critical thresholds. Combining this with the running-time bound from Theorem~\ref{thm:inner-edit} gives the claimed running time.
\end{proof}

\vspace{-4pt}

\paragraph{Interpretation.}
Compared with the augmentation-only guarantee in \cref{thm:outer-layer-guarantee}, the editable guarantee is
strictly more flexible. False-negative heavy assignments can be inserted, false-positive heavy assignments can be deleted, and incorrect heavy assignments can be edited to another machine.
Therefore, the relevant error is not merely the number or size of missed heavy assignments, nor merely the overload caused by trusted false positives, but rather the smallest residual combination of these two effects after at most $K$ edits.
This residual combination is exactly captured by $\Delta^{\mathrm{edit}}_{\mathrm{OPT},K}(\widehat{\phi},\sigma^\star)$.

As the prediction becomes more accurate, the edit-aware error $\Delta^{\mathrm{edit}}_{\mathrm{OPT},K}(\widehat{\phi},\sigma^\star)$ decreases and the guarantee approaches $(1+\varepsilon)\mathrm{OPT}$.
As the prediction becomes worse, the bound degrades smoothly through the same error term, but it never exceeds $2\mathrm{OPT}$ regardless of the prediction quality.
Thus the editable variant preserves the same consistency--smoothness--robustness behavior as the main algorithm, while allowing the search budget to correct both types of prediction error.

\paragraph{Relation to the augmentation-only guarantee (\cref{thm:outer-layer-guarantee}).}
The editable guarantee generalizes the augmentation-only guarantee. Indeed, the
augmentation-only search considered earlier corresponds to the restricted
subfamily of $\mathcal{E}_{T,K}(\widehat{\phi})$ in which the only allowed
changes are insertions of missed heavy assignments, i.e., changes of the form
$\bot\to i$. For the candidate that inserts the best $K$ missed heavy
assignments according to the comparison schedule $\sigma$, the primitive
false-negative threshold is exactly $R^-_{T,K}(\widehat{\phi},\sigma)$ and the
primitive false-positive overload remains $R^+_T(\widehat{\phi},\sigma)$.
Since the editable error minimizes over a larger family of corrected partial
assignments, we have
\begin{equation}
  \smash{\Delta^{\mathrm{edit}}_{T,K}(\widehat{\phi},\sigma)
  \le
  \max\{\varepsilon T,R^-_{T,K}(\widehat{\phi},\sigma)\}
  +
  R^+_T(\widehat{\phi},\sigma).}
\end{equation}
Equivalently,
\begin{equation}
  \smash{\max\left\{
  \varepsilon T,
  R^{-,\mathrm{edit}}_{T,K}(\widehat{\phi},\sigma)
  \right\}
  +
  R^{+,\mathrm{edit}}_{T,K}(\widehat{\phi},\sigma)
  \le
  \max\{\varepsilon T,R^-_{T,K}(\widehat{\phi},\sigma)\}
  +
  R^+_T(\widehat{\phi},\sigma).}
\end{equation}
Thus, allowing deletions and edits cannot worsen the prediction-dependent bound at
the same asymptotic running time.

\subsection{Corollary for consistency under small edit distance}

For further intuition, we state the following consistency corollary for the case where the prediction is sufficiently accurate.

\begin{corollary}[Consistency under small edit distance]
\label[corollary]{cor:edit-consistency}
Let $\sigma^\star$ be an optimal schedule. Define its heavy projection at scale
$\OPT$ by
\begin{equation}
\sigma^\star_\OPT(j)
:=
\begin{cases}
\sigma^\star(j),
&
\text{if } p_{\sigma^\star(j),j}>\varepsilon\OPT,\\
\bot,
&
\text{otherwise}.
\end{cases}
\end{equation}
Define the heavy edit distance between the prediction and the optimum by
\begin{equation}
d^\mathrm{edit}_\OPT
(\widehat{\phi},\sigma^\star)
:=
\left|
\left\{
j\in[n]:
\widehat{\phi}_\OPT(j)
\neq
\sigma^\star_\OPT(j)
\right\}
\right|.
\end{equation}
If
$
K
\ge
d^\mathrm{edit}_\OPT
(\widehat{\phi},\sigma^\star)
$,
then the editable outer-layer algorithm returns a feasible schedule $S$
satisfying
\begin{equation}
C_{\max}(S)
\le
(1+\varepsilon)\OPT.
\end{equation}
\end{corollary}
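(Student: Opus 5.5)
The plan is to deduce the corollary from \cref{thm:outer-edit} by exhibiting one particular edited partial assignment in $\mathcal{E}_{\OPT,K}(\widehat{\phi})$ whose combined error equals $\varepsilon\OPT$. The natural candidate is the heavy projection $\sigma^\star_\OPT$ itself. Once we know $\Delta^{\mathrm{edit}}_{\OPT,K}(\widehat{\phi},\sigma^\star)\le\varepsilon\OPT$, the equivalent form of \cref{thm:outer-edit} gives
\[
C_{\max}(S)\le \min\{\OPT+\varepsilon\OPT,\,2\OPT\}=(1+\varepsilon)\OPT,
\]
where the last equality uses $\varepsilon<1$.

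The first step is to check that $\sigma^\star_\OPT\in\mathcal{E}_{\OPT,K}(\widehat{\phi})$.
\begin{itemize}
\item \emph{Distance constraint.} By definition $d_\OPT(\sigma^\star_\OPT,\widehat{\phi})$ counts the jobs $j$ with $\sigma^\star_\OPT(j)\neq\widehat{\phi}_\OPT(j)$. This is exactly $d^{\mathrm{edit}}_\OPT(\widehat{\phi},\sigma^\star)$, which is at most $K$ by hypothesis.
\item \emph{Heaviness constraint.} If $\sigma^\star_\OPT(j)\neq\bot$, then by construction $p_{\sigma^\star(j),j}>\varepsilon\OPT$.
\end{itemize}
So the candidate is admissible.

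The second step is to compute the two error terms for $\psi=\sigma^\star_\OPT$, using the definitions in \cref{eq:fn-fp-error-measures} with $\psi$ in place of $\widehat{\phi}_T$, as prescribed in the editable section.
\begin{itemize}
\item \emph{Missed heavy jobs.} These are the jobs $j$ with $p_{\sigma^\star(j),j}>\varepsilon\OPT$ and $\psi(j)=\bot$. Every such $j$ has $\psi(j)=\sigma^\star(j)\neq\bot$, so the set is empty and $R^-_\OPT(\psi,\sigma^\star)=0$.
\item \emph{False positives.} Every $j$ with $\psi(j)\neq\bot$ satisfies $\psi(j)=\sigma^\star(j)$. The sum defining the overload is therefore empty, and $R^+_\OPT(\psi,\sigma^\star)=0$.
\end{itemize}
Hence $\max\{\varepsilon\OPT,0\}+0=\varepsilon\OPT$. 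Taking the minimum over $\mathcal{E}_{\OPT,K}(\widehat{\phi})$ gives $\Delta^{\mathrm{edit}}_{\OPT,K}\le\varepsilon\OPT$, and invoking \cref{thm:outer-edit} finishes the proof.

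The only delicate point is interpretive. The projection of an edited assignment $\psi$ should be $\psi$ itself, i.e.\ $R^\pm_T(\psi,\cdot)$ is evaluated with $\psi$ in the role of $\widehat{\phi}_T$. This matches how the proof of \cref{thm:inner-edit} uses these quantities, and it is consistent because every $\psi\in\mathcal{E}_{T,K}$ already contains only $T$-heavy assignments.
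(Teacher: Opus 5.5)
Your proposal is correct and follows the paper's own proof: you exhibit $\sigma^\star_\OPT\in\mathcal{E}_{\OPT,K}(\widehat{\phi})$ and observe that both error terms vanish, so $\Delta^{\mathrm{edit}}_{\OPT,K}(\widehat{\phi},\sigma^\star)\le\varepsilon\OPT$, and then apply \cref{thm:outer-edit} using $\varepsilon<1$. Two details the paper leaves implicit are spelled out in your version: that $d_\OPT(\sigma^\star_\OPT,\widehat{\phi})$ coincides with $d^{\mathrm{edit}}_\OPT(\widehat{\phi},\sigma^\star)$, and that each $\psi\in\mathcal{E}_{T,K}(\widehat{\phi})$ serves as its own heavy projection.
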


\vspace{-8pt}

\begin{proof}
The condition
$K\ge d^\mathrm{edit}_\OPT(\widehat{\phi},\sigma^\star)$ implies
that
$
\sigma^\star_\OPT
\in
\mathcal{E}_{\OPT,K}(\widehat{\phi}).
$
For this edited partial assignment, there are no missed
$\varepsilon\OPT$-heavy jobs and no false-positive fixed assignments
with respect to $\sigma^\star$. Hence we have
$
R^-_\OPT
(\sigma^\star_\OPT,\sigma^\star)=0,
R^+_\OPT
(\sigma^\star_\OPT,\sigma^\star)=0
$, and therefore
$
\Delta^\mathrm{edit}_{\OPT,K}
(\widehat{\phi},\sigma^\star)
\le
\varepsilon\OPT
$.
Applying Theorem~\ref{thm:outer-edit} gives
$
C_{\max}(S)
\le
\min\{(1+\varepsilon)\OPT,2\OPT\}
=
(1+\varepsilon)\OPT
$,
where the equality uses $\varepsilon\in(0,1)$.
\end{proof}

\paragraph{Interpretation.}
\cref{cor:edit-consistency} identifies the strongest consistency regime for the
editable variant. If the predicted heavy partial assignment is within $K$ edits
of the heavy projection of an optimal schedule, the algorithm incurs only an additive
$\varepsilon\OPT$ loss.

This differs from the augmentation-only consistency statement in \cref{cor:outer-layer-guarantee-asymmetric}.
In the augmentation-only variant, a sufficiently large budget can remove all false negatives, but any false positives that survive the $T$-heavy projection remain fixed and still contribute to the makespan bound.
On the other hand, in the editable variant,
false positives can also be corrected, either by deletion or by editing the predicted machine. Hence, when the entire heavy partial assignment is within the edit budget, both residual error terms vanish:
\begin{equation}
  R^{-,\mathrm{edit}}_{\mathrm{OPT},K}
  (\widehat{\phi},\sigma^\star)
  =
  R^{+,\mathrm{edit}}_{\mathrm{OPT},K}
  (\widehat{\phi},\sigma^\star)
  =
  0.
\end{equation}
The algorithm therefore achieves the clean
$(1+\varepsilon)\mathrm{OPT}$ guarantee.

As in \cref{cor:outer-layer-guarantee-asymmetric}, the required budget here is also prediction-sensitive.
If the prediction already matches the optimal heavy projection, then $K=0$ is sufficient. If the prediction has a small number of incorrect, missing, or extra heavy assignments, the required budget increases by exactly the heavy edit distance. Even in the worst case, the required budget does not exceed the one needed in the classical $(1+\varepsilon)$-type baselines.

\section{Detailed experimental setup}
\label{apx:detailed-experimental-setup}

\paragraph{Prediction generation.}
We generate synthetic predictions from an optimal schedule \(\sigma^\star\).
We first keep only the heavy assignments of \(\sigma^\star\), namely assignments with processing time greater than \(\varepsilon \OPT\), and set all other jobs to \(\bot\).
To introduce false negatives, we randomly select a \(\rho^-\) fraction of these heavy jobs and change their predicted assignment to \(\bot\).
To introduce false positives, we restrict attention to heavy jobs that have at least one alternative heavy machine \(i\neq \sigma^\star(j)\) with \(p_{ij}>\varepsilon \OPT\); we then randomly select a \(\rho^+\) fraction of such jobs and reassign each of them to an alternative heavy machine.
When varying the error level, we sweep \(\rho^-\) or \(\rho^+\) over
\(\{0,0.001,0.002,\ldots,1\}\), and keep the predictions that yield distinct values of
\(R_{\OPT}^-(\widehat{\phi},\sigma^\star)\) or
\(R_{\OPT}^+(\widehat{\phi},\sigma^\star)\).

\paragraph{Implementation details.}
All experiments are implemented in Python 3.13.11.
We compute optimal schedules using the CP-SAT solver in OR-Tools~\citep{cpsatlp}.
The LPs used in our algorithm and in the classical baseline are solved using the HiGHS solver~\citep{Huangfu_Parallelizing_the_dual_2018} through SciPy~\citep{2020SciPy-NMeth}.
We use OR-Tools version 9.15.6755 and SciPy version 1.17.1.

\paragraph{Machine specifications.}
All experiments are run on a single machine with an Intel\textsuperscript{\textregistered} Core\textsuperscript{TM} Ultra 7 155H CPU (22 logical CPUs) and 16\,GB RAM, running Arch Linux.

\section{Additional experimental results}
\label{apx:additional-experiments}

\subsection{Additional results on false negatives and false positives}
\label{apx:additional-experiments-fp-fn}

In \cref{sec:experiment-effect-false-negatives-positives}, we reported experiments for the setting with $m=10$ and $n=100$, where the processing times $p_{ij}$ are generated independently and uniformly at random.
In this appendix, we provide additional experiments under different choices of $m$ and $n$, and under a non-uniform processing-time model.

\begin{figure}[t!]
  \centering

  \hspace*{-8mm}\includegraphics[width=0.8\textwidth]{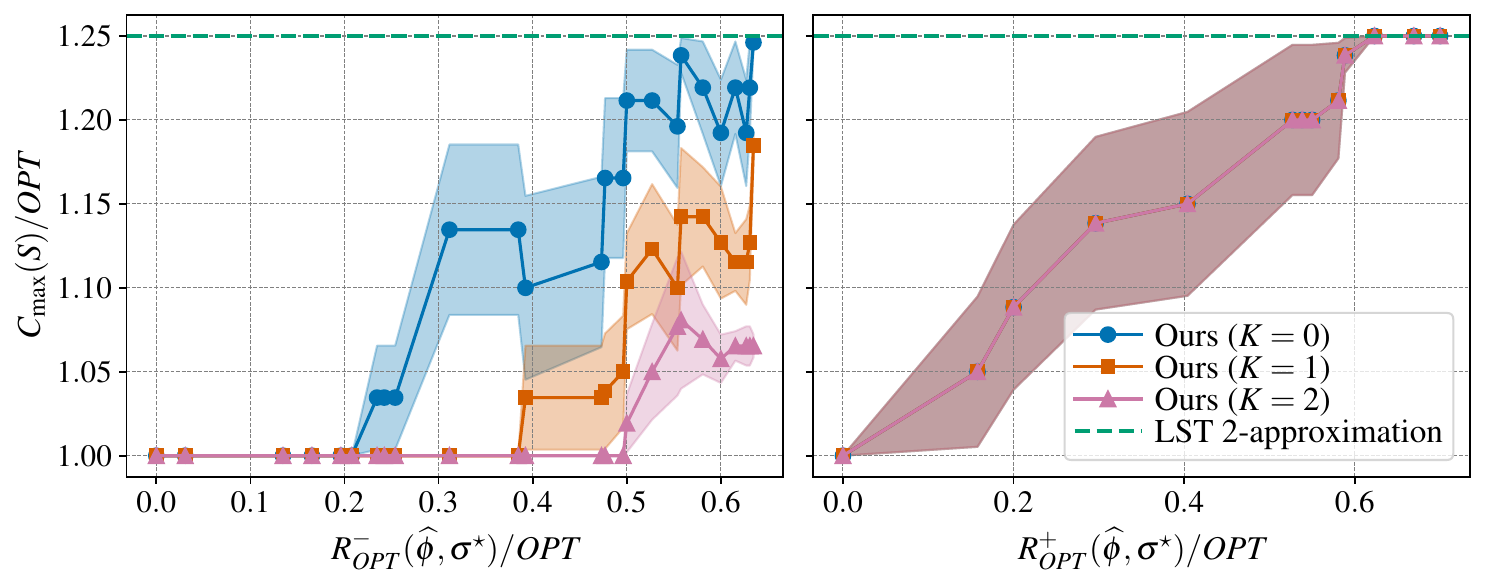}

  \vspace{-2mm}
  \hspace*{2mm}\begin{minipage}{0.36\linewidth}
    \centering
    \subcaption{
      \centering
      $m=10$, $n=50$, i.i.d. uniform;\\
      false negatives
    }
    \label{fig:apx-fn-10x50}
  \end{minipage}
  \begin{minipage}{0.36\linewidth}
    \centering
    \subcaption{
      \centering
      $m=10$, $n=50$, i.i.d. uniform;\\
      false positives
    }
    \label{fig:apx-fp-10x50}
  \end{minipage}

  \vspace{2.5mm}
  \hspace*{-8mm}\includegraphics[width=0.8\textwidth]{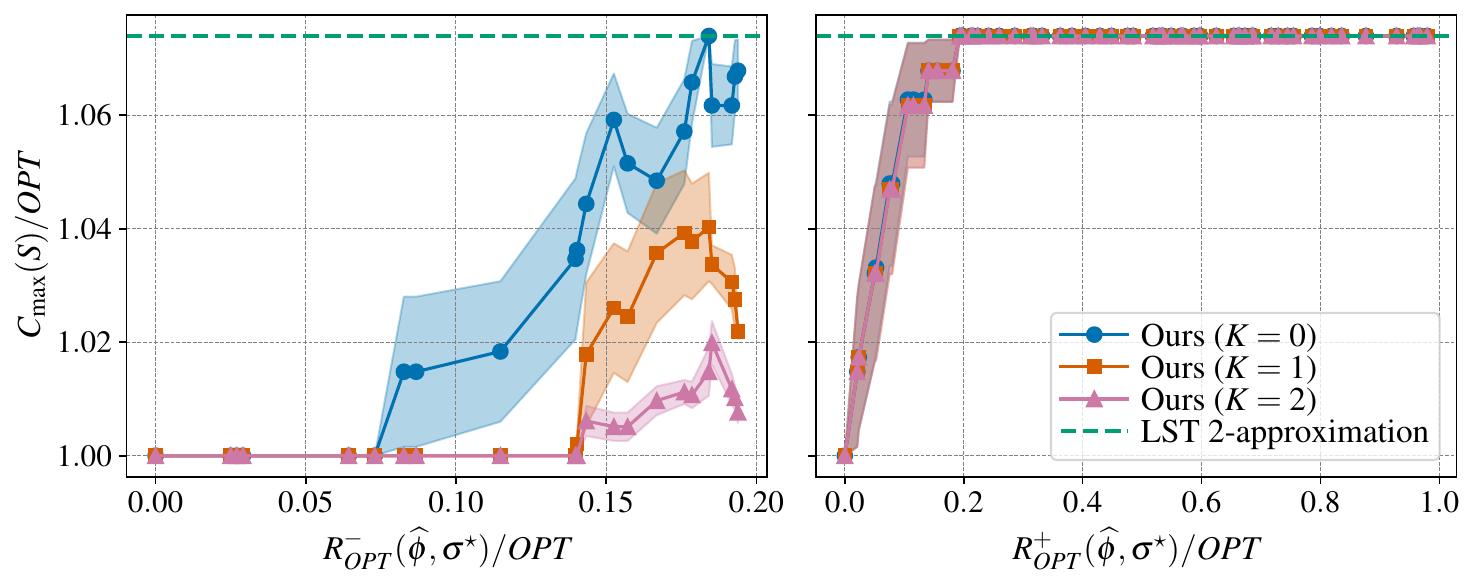}

  \vspace{-2mm}
  \hspace*{2mm}\begin{minipage}{0.36\linewidth}
    \centering
    \subcaption{
      \centering
      $m=5$, $n=100$, i.i.d. uniform;\\
      false negatives
    }
    \label{fig:apx-fn-5x100}
  \end{minipage}
  \begin{minipage}{0.36\linewidth}
    \centering
    \subcaption{
      \centering
      $m=5$, $n=100$, i.i.d. uniform;\\
      false positives
    }
    \label{fig:apx-fp-5x100}
  \end{minipage}

  \vspace{2.5mm}
  \hspace*{-8mm}\includegraphics[width=0.8\textwidth]{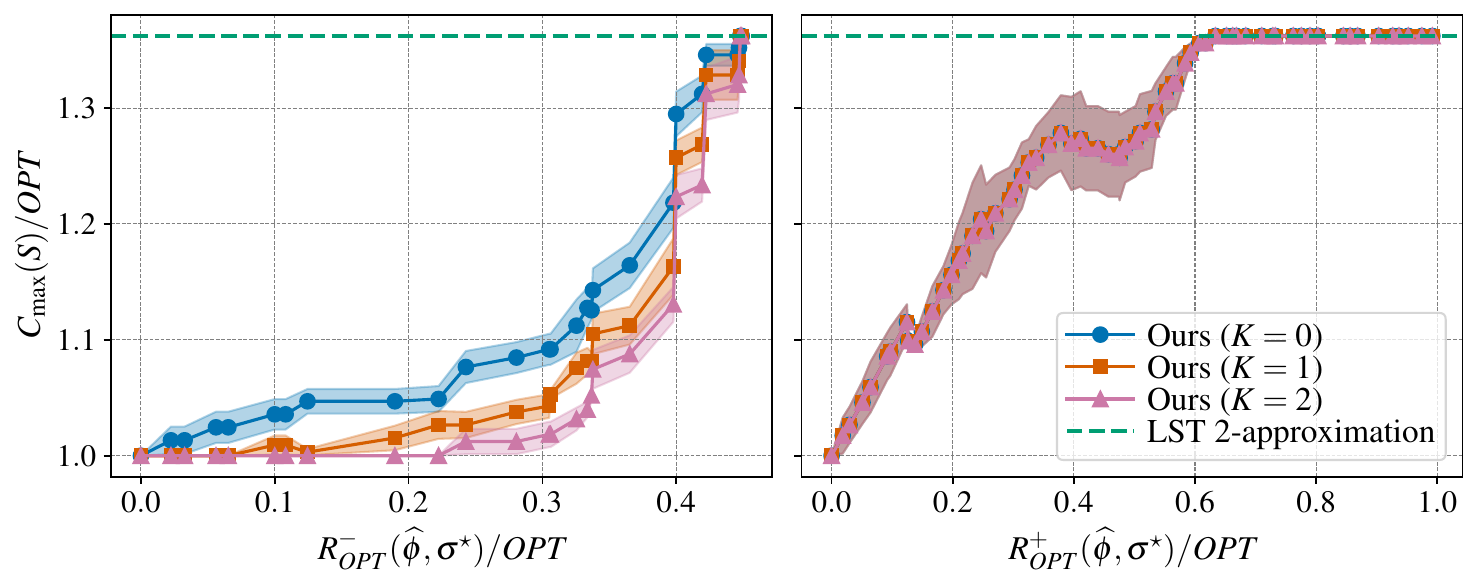}

  \vspace{-2mm}
  \begin{minipage}{0.39\linewidth}
    \centering
    \subcaption{
      \centering
      $m=10$, $n=100$, machine-correlated;\\
      false negatives
    }
    \label{fig:apx-fn-corr}
  \end{minipage}
  \begin{minipage}{0.39\linewidth}
    \centering
    \subcaption{
      \centering
      $m=10$, $n=100$, machine-correlated;\\
      false positives
    }
    \label{fig:apx-fp-corr}
  \end{minipage}

  \caption{
    Additional numerical results under alternative experimental settings.
    Panels (a) and (b) show the results for $m=10$ and $n=50$ with independently
    and uniformly generated processing times.
    Panels (c) and (d) show the results for $m=5$ and $n=100$ under the same
    generation model.
    Panels (e) and (f) show the results for $m=10$ and $n=100$ with
    machine-correlated processing times.
    In each row, the left panel varies the false-negative rate and the right
    panel varies the false-positive rate.
    The lines denote the mean, and the shaded regions denote the standard error,
    both computed over five independent runs with different random seeds.
  }
  \label{fig:additional-numerical-results-smoothness}
\end{figure}

\subsubsection{Experimental settings}

We consider the following three settings:
\begin{itemize}
  \item $m=10$ and $n=50$, with processing times $p_{ij}$ generated
  independently and uniformly at random as in
  \cref{sec:experiment-effect-false-negatives-positives}.

  \item $m=5$ and $n=100$, with processing times $p_{ij}$ generated
  independently and uniformly at random as in
  \cref{sec:experiment-effect-false-negatives-positives}.

  \item $m=10$ and $n=100$, with processing times generated from a
  machine-correlated model.
\end{itemize}
For the machine-correlated setting, following prior work on $R\|C_{\max}$
\citep{FANJULPEYRO2011301,FANJULPEYRO201055,OJMO_2021__2__A2_0}, we sample
$
  \alpha_i \sim \operatorname{Uniform}\{1,\ldots,100\}
$
and
$
  \eta_{ij} \sim \operatorname{Uniform}\{1,\ldots,20\}
$
independently for all $i\in[m]$ and $j\in[n]$, and set
\begin{equation}
  p_{ij}\coloneqq\alpha_i+\eta_{ij}.
\end{equation}
This model induces correlations among processing times associated with the
same machine through the shared machine-specific component $\alpha_i$.

\subsubsection{Results and discussion}
\label{apx:additional-experiments-results}

The results are shown in \cref{fig:additional-numerical-results-smoothness}.
As in \cref{sec:experiment-effect-false-negatives-positives}, both classical $(1+\varepsilon)$-type baselines were computationally infeasible in these experiments due to their excessive running time.

Across all instance classes, the results show the same qualitative behavior.
For both false-negative and false-positive errors, our algorithm attains the optimal makespan when the prediction error is zero, and its makespan degrades smoothly as the error increases. The resulting schedules remain no worse than the classical $2$-approximation baseline.
In the false-negative setting, increasing the search budget $K$ further improves the makespan, reflecting the fact that additional local search can recover more missed heavy assignments. These trends are consistent with the results in \cref{sec:experiment-effect-false-negatives-positives} and provide additional empirical support for the effectiveness of our approach.

\begin{figure}[t]
  \centering
  \includegraphics[width=0.51\linewidth]{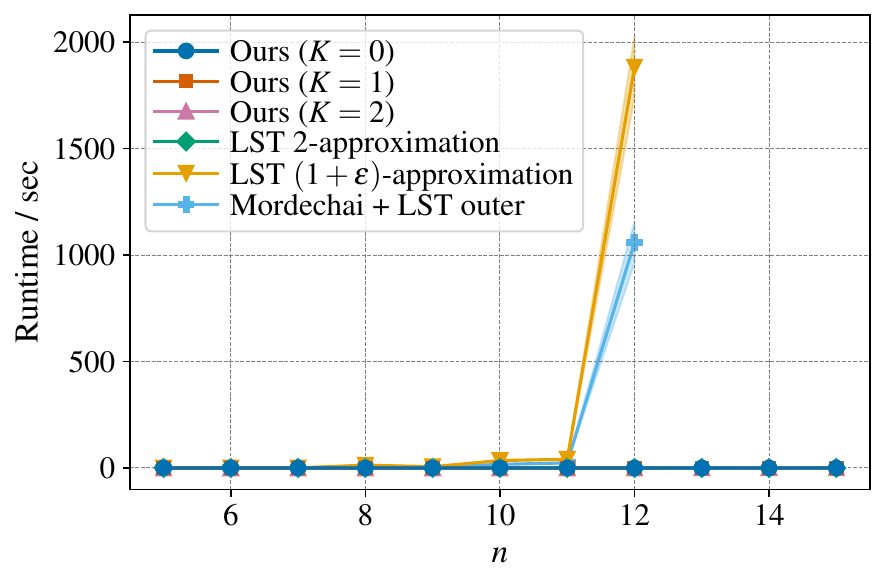}
  \caption{
    Additional running-time results with fixed \(m=5\) and varying \(n\).
    The classical \((1+\varepsilon)\)-type baselines are shown only up to
    \(n=12\) because of their large running time.
    The lines denote the mean, and the shaded regions denote the standard error,
    both computed over five independent runs with different random seeds.
  }
  \label{fig:additional-runtime-varying-n}
\end{figure}

\subsection{Additional running-time results with varying \(n\)}
\label{apx:additional-experiments-runtime-n}

We also report additional running-time results in which the number of machines
is fixed and the number of jobs is varied. Specifically, we fix \(m=5\) and vary
$
  n\in\{5,6,7,8,9,10,11,12,13,14,15\}.
$
The results are shown in \cref{fig:additional-runtime-varying-n}. Since the
classical \((1+\varepsilon)\)-type baselines become computationally expensive,
we report their results only up to \(n=12\).

The results show that our algorithm remains substantially faster than the
classical \((1+\varepsilon)\)-type baselines even when \(m\) is fixed and \(n\)
varies. This reflects the fact that the classical \((1+\varepsilon)\)-type
running times can still contain large factors such as \(n^{m/\varepsilon}\) or
\(2^{mn}\), even for constant \(m\). In contrast, our algorithm runs almost as
fast as the polynomial-time \(2\)-approximation baseline throughout this range,
which is consistent with our theoretical guarantee for constant search budget
\(K\).

\section{Limitations and broader impact}
\label{sec:limitations}

\paragraph{Limitations.}
Our results are developed for unrelated-machines makespan scheduling with predictions of heavy assignments.
While this prediction model is well aligned with the heavy/short decomposition underlying classical approximation schemes, other scheduling settings may require different prediction objects.
Our experiments use synthetic instances and controlled prediction errors, and evaluating the method with learned predictions on application-specific scheduling data remains an important direction for future work.

\paragraph{Broader Impact.}
The potential positive impact of this work is to provide a principled way to combine increasingly accurate machine-learning predictions with classical approximation guarantees for scheduling problems, a setting not covered by the framework of \citet{antoniadis2025approximation}.
As prediction quality continues to improve, such approaches enable scheduling algorithms to substantially improve practical performance beyond classical worst-case baselines, while retaining robustness when predictions are inaccurate.

We do not identify direct negative societal impacts specific to this theoretical
work, since it does not introduce a deployed system, collect data, or make
decisions about individuals. As with any scheduling system, however, deployment
in application-specific settings should validate the quality of predictions and
the consequences of scheduling decisions.


\newpage
\makeatletter
\if@preprint\else
\input{other/checklist}
\fi
\makeatother

\end{document}